\renewcommand\footnotetextcopyrightpermission[1]{} % removes footnote with conference information in first column
\begin{document}

%%
%% The "title" command has an optional parameter,
%% allowing the author to define a "short title" to be used in page headers.
\title{Connection Incentives in Cost Sharing Mechanisms with Budgets}

%%
%% The "author" command and its associated commands are used to define
%% the authors and their affiliations.
%% Of note is the shared affiliation of the first two authors, and the
%% "authornote" and "authornotemark" commands
%% used to denote shared contribution to the research.

\author{Tianyi Zhang, Dengji Zhao, Junyu Zhang, and Sizhe Gu}
\affiliation{%
  \institution{ShanghaiTech University}
  \city{Shanghai}
  \country{China}}
\email{{zhangty, zhaodj, zhangjy22022, guszh}@shanghaitech.edu.cn}

% \author{Tianyi Zhang}
% \affiliation{%
%   \institution{ShanghaiTech University}
%   \city{Shanghai}
%   \country{China}}
% \email{zhangty@shanghaitech.edu.cn}

% \author{Dengji Zhao}
% \affiliation{%
%   \institution{ShanghaiTech University}
%   \city{Shanghai}
%   \country{China}}
% \email{zhaodj@shanghaitech.edu.cn}

% \author{Junyu Zhang}
% \affiliation{%
%   \institution{ShanghaiTech University}
%   \city{Shanghai}
%   \country{China}}
% \email{zhangjy22022@shanghaitech.edu.cn}

% \author{Sizhe Gu}
% \affiliation{%
%   \institution{ShanghaiTech University}
%   \city{Shanghai}
%   \country{China}}
% \email{guszh@shanghaitech.edu.cn}
%%
%% By default, the full list of authors will be used in the page
%% headers. Often, this list is too long, and will overlap
%% other information printed in the page headers. This command allows
%% the author to define a more concise list
%% of authors' names for this purpose.

%%
%% The abstract is a short summary of the work to be presented in the
%% article.
\begin{abstract}
In a cost sharing problem on a weighted undirected graph, all other nodes want to connect to the source node for some service. Each edge has a cost denoted by a weight and all the connected nodes should share the total cost for the connectivity. The goal of the existing solutions (e.g. folk solution and cycle-complete solution) is to design cost sharing rules with nice properties, e.g. budget balance and cost monotonicity. However, they did not consider the cases that each non-source node has a budget which is the maximum it can pay for its cost share and may cut its adjacent edges to reduce its cost share. In this paper, we design two cost sharing mechanisms taking into account the nodes' budgets and incentivizing all nodes to report all their adjacent edges so that we can minimize the total cost for the connectivity.
\end{abstract}

%%
%% The code below is generated by the tool at http://dl.acm.org/ccs.cfm.
%% Please copy and paste the code instead of the example below.
%%
%%
%% Keywords. The author(s) should pick words that accurately describe
%% the work being presented. Separate the keywords with commas.
\keywords{Cost sharing, Mechanism design, Budgets, Truthfulness}

%% A "teaser" image appears between the author and affiliation
%% information and the body of the document, and typically spans the
%% page.

%%
%% This command processes the author and affiliation and title
%% information and builds the first part of the formatted document.
\maketitle

\section{Introduction}
Cost sharing problems are popular in economics and computer science~\cite{moulin2001strategyproof,leonardi2004cross,lorenzo2009characterization,DBLP:journals/mss/TrudeauV20}. There are a group of agents located at different points and a source. The agents need to connect to the source directly or indirectly in order to get a service or information~\cite{moulin1999incremental,gomez2017monotonic,todo2020split,DBLP:journals/igtr/BergantinosCLL22}. The cost of connecting any pair of agents is known and the total cost of connectivity has to be shared among all connected agents. This setting models many real-world applications from telephone and cable TV to water supply networks~\cite{gomez2011merge,trudeau2017set,DBLP:journals/ijgt/Trudeau23}.

To connect with the source, an agent may need to go through other agents. The intermediate agents can strategically block the connection if their cost share is reduced by doing so. However, such strategic behavior will potentially increase the total cost of connecting all agents. Therefore, it is important to design cost sharing mechanisms to prevent such strategic behavior so that we can minimize the total cost of the connectivity.

To solve the problem, we propose a cost sharing mechanism where each agent pays an average marginal cost to connect it to the source under the assumption that each agent's budget is unlimited. The reason for this assumption is that in some scenarios all the agents have enough budgets to pay their cost share. Since there exist some scenarios where the amount each agent can pay is limited, we also consider the situation where each agent has a limited budget for the share. Then we need to make sure that each agent can afford its cost share. One intuitive idea is that agents with larger budgets cover more costs than the agents with lower budgets. We need to give the agents with larger budgets incentives to do so, otherwise they will disconnect the agents with lower budgets. To that end, we determine their cost share in terms of their savings which are the difference between their budgets and the total cost.

\emph{Our contribution:} For cost sharing problems, to incentivize agents to share all their connections, we first design a mechanism where each agent's budget is unlimited that satisfies truthfulness (each agent is incentivized to offer all its connections) and other desirable properties studied in the literature. However, this mechanism does not satisfy budget feasibility (each agent's budget is larger than its cost share) when each agent's budget is limited. Therefore, we propose another mechanism that satisfies budget feasibility, truthfulness and other desirable properties.

\section{Related Work}
There is a rich literature for the cost sharing problem~\cite{2020Cooperative,DBLP:journals/anor/BergantinosL21,DBLP:journals/eor/HernandezPV23} (see the paper~\cite{2021A} for a comprehensive overview). 

Some treated the problem from the non-cooperative game perspective. Tijs and Driessen~\cite{tijs1986game} proposed the cost gap allocation (CGA) method based on marginal contribution. Bird~\cite{bird1976cost}, Dutta and Kar~\cite{dutta2004cost}, Norde \textit{et al.}~\cite{norde2004minimum}, Tijs \textit{et al.}~\cite{tijs2006obligation} and Hougaard \textit{et al.}~\cite{hougaard2010decentralized} provided cost sharing policies based on the minimum spanning tree~\cite{prim1957shortest} of a graph. However, they all cannot prevent the strategic behavior of agents since the agents can change the minimum spanning tree to reduce their cost share. 

Others treated the problem from the cooperative game perspective. They are all based on the Shapley value and differ in the definition of the coalition value. Kar~\cite{kar2002axiomatization} proposed the Kar solution and defined the value of a coalition $S$ as the minimal cost of connecting all agents of $S$ to the source without going through the agents outside of $S$. Berganti{\~n}os and Vidal-Puga~\cite{bergantinos2007fair} proposed the folk solution. They first computed the irreducible cost matrix, and then based on it, they defined the value of a coalition $S$ in the same way as the Kar solution. However, the folk solution throws away much information of the original graph. To look for a way to obtain a solution without throwing away as much information as the folk solution, Trudeau~\cite{trudeau2012new} proposed the cycle-complete solution. They made a less extreme transformation of the cost matrix.

These solutions all assumed that the agents in a coalition cannot use others outside of the coalition to connect to the source. Furthermore, Berganti{\~n}os and Vidal-Puga~\cite{bergantinos2007optimistic} introduced the optimistic game-based solution, where the value of a coalition $S$ is defined as the minimal cost of connecting all agents of $S$ to the source given that agents outside of $S$ are already connected to the source.

In summary, existing solutions are limited to complete graphs and do not consider the scenario where each agent has a budget to pay its cost share and has a strategic behavior. Therefore, they are inadequate for solving our problem.

\section{The Model}
\label{The model}
We consider a cost sharing problem to connect all the nodes on a weighted undirected graph $G=\langle V \cup \{s\},E \rangle$. Node $s$ is the source to which all the other nodes want to connect. The weight of each edge $(i,j) \in E$ denoted by $c_{(i,j)} \geq 0$ represents the cost to use the edge to connect $i$ with $j$. The total cost of the connectivity has to be shared among all connected nodes except for $s$. Each node $i \in V$ has a public budget $B_i > 0$ which is the maximum cost that $i$ can pay.

Consider a real scenario of the graph $G=\langle V \cup \{s\},E \rangle$, where $s$ represents a power station and each node $i \in V$ represents a village. A village can have electricity if there exists an electric transmission line connecting $s$ to the village. All villages require electricity and they need to share the total cost of the connectivity. Each village has a budget for the connectivity. 

Given the graph, the minimum cost of connecting all the nodes is the weight of a minimum spanning tree. The question here is how they share the minimum cost. We consider a natural strategic behavior that each node except for the source can cut its adjacent edges to share less cost. An edge $(i,j)$ cannot be used for the connectivity if either $i$ or $j$ cuts it. Our goal is to design cost sharing mechanisms to incentivize nodes to share all their adjacent edges so that we can use all the edges to minimize the total cost of the connectivity. 

Formally, let $\theta_i$ $(i \in V \cup \{s\})$ be the set of $i$'s adjacent edges and it is called $i$'s {\em type}. Let $\theta=(\theta_s,\theta_1,\cdots,\theta_{|V|})$ be the type profile of all nodes including the source $s$. We also write $\theta=(\theta_i,\theta_{-i})$, where $\theta_{-i}=(\theta_s,\theta_1,\cdots,\theta_{i-1},\theta_{i+1},\cdots,\theta_{|V|})$ is the type profile of all nodes except for $i$. Let $\Theta_i$ be the type space of $i$ and $\Theta=\Theta_s \times \Theta_1 \times \cdots \times \Theta_{|V|}$ be the type profile space of all nodes. 

We design cost sharing mechanisms that ask each node to report the set of its adjacent edges that can be used for the connectivity. Let $\theta_i' \subseteq \theta_i$ be the {\em report} of $i$, and $\theta'=(\theta_s,\theta_1',\cdots,\theta_{|V|}')$ be the report profile of all nodes. Note that the source does not behave strategically in this setting. Given a report profile $\theta' \in \Theta$, the graph induced by $\theta'$ is denoted by $G_{\theta'}=\langle V \cup \{s\} , E_{\theta'} \rangle \subseteq \langle V \cup \{s\}, E \rangle$, where $E_{\theta'} = \{(i,j)| (i,j) \in (\theta_i' \cap \theta_j')\} \subseteq E$.

\begin{definition}
\label{def2}
A cost sharing mechanism consists of a node selection policy $g: \Theta \rightarrow 2^V$, an edge selection policy $f: \Theta \rightarrow 2^E$, and a cost sharing policy $x = (x_i)_{i\in V}$ where $x_i: \Theta \rightarrow \mathbb{R}$. Given a report profile $\theta' \in \Theta$, $g(\theta')\subseteq V$ is the set of connected nodes, $f(\theta') \subseteq E_{\theta'}$ is the set of selected edges connecting the nodes in $g(\theta')$, and $x_i(\theta')$ is the cost share of $i$ (it will be zero if $i \notin g(\theta')$).
\end{definition}

For simplicity, we use $(g,f,x)$ to denote a cost sharing mechanism. In the following, we introduce the desirable properties of a cost sharing mechanism.

Truthfulness requires that for each node, its cost share for the connectivity is not decreased by cutting its adjacent edges. That is, offering all its adjacent edges minimizes its cost share.

\begin{definition}
\label{defic}
A cost sharing mechanism $(g,f,x)$ satisfies \textbf{truthfulness} if $$x_i((\theta_i, \theta'_{-i})) \leq x_i((\theta_i', \theta'_{-i})),$$ for all $i\in g(\theta_i, \theta'_{-i}) \cap g(\theta_i', \theta'_{-i})$, for all $\theta_i\in \Theta_i$, for all $\theta_i'\in \Theta_i$, and for all $\theta_{-i}'\in \Theta_{-i}=\Theta_s \times \Theta_1 \times \cdots \times \Theta_{i-1} \times \Theta_{i+1} \times \cdots \times \Theta_{|V|}$.
\end{definition}

Budget feasibility requires that each node's cost share is not over its budget. 
\begin{definition}
\label{defir}
A cost sharing mechanism $(g,f,x)$ satisfies \textbf{budget feasibility (BF)} if $$x_i(\theta') \leq B_i$$ for all $i \in V$, for all $\theta' \in \Theta$. 
\end{definition}

Cost monotonicity states that the cost share of a selected node will weakly increase if the cost of one of its adjacent edges increases within the same report profile. 

\begin{definition}
A cost sharing mechanism $(g,f,x)$ satisfies \textbf{cost monotonicity (CM)} if $$x_i(\theta') \leq x_i^+(\theta')$$ for all $\theta' \in \Theta$ and for all $i \in g(\theta')$, where $x_i^+(\theta')$ is the cost share of $i$ when the cost of one edge $(i,j) \in \theta_i'$ increases and $i$ is still selected.
\end{definition}

Budget balance states that the sum of all nodes' cost share equals the total cost of the selected edges for all report profiles. That is, the mechanism has no profit or loss.

\begin{definition}
A cost sharing mechanism $(g,f,x)$ satisfies \textbf{budget balance (BB)} if $$\sum_{i \in V}x_i(\theta')=\sum_{(i,j) \in f(\theta')}c_{(i,j)}$$ for all $\theta' \in \Theta$.
\end{definition}

Finally, positiveness states that each node's cost share should be non-negative.

\begin{definition}
A cost sharing mechanism $(g,f,x)$ satisfies \textbf{positiveness} if $$x_i(\theta')\geq 0$$ for all $i \in V$, for all $\theta' \in \Theta$.  
\end{definition}

In the rest of the paper, we design cost sharing mechanisms to satisfy the above properties.

\section{Cost Sharing with Unlimited Budget}
\label{csm-ub}
In this section, we first consider the case where each agent's budget is unlimited, i.e. they can always afford the cost share. We propose a mechanism that satisfies truthfulness, cost monotonicity, budget balance and positiveness.

The key ideas of the mechanism are as follows. First, we compute the minimum cost of connecting any subset of agents to the source. Then we get the marginal cost of adding each agent to a subset of other agents. The cost share of each agent is set to be the average marginal cost considering all possible joining sequences. The mechanism is formally described in Algorithm~\ref{alg}. A running example is given in Example~\ref{exp1}.

\begin{algorithm}[htb]
\caption{Average Marginal Cost Mechanism (AMCM)}
\label{alg}

\textbf{Input}:
A report profile $\theta' \in \Theta$\\
\textbf{Output}:
The selected nodes $g(\theta')$, the selected edges $f(\theta')$, 
the cost share $x(\theta')$
	\begin{algorithmic}[1]
	    \STATE Generate $G_{\theta'}$;
	    \STATE Starting from $s$, use Prim's algorithm~\cite{prim1957shortest} to compute a minimum spanning tree (MST) of $G_{\theta'}$;
	    \STATE Set $g(\theta')$ to be all the nodes of the MST and $f(\theta')$ to be all the edges of the MST;
		\FOR{each $S \subseteq g(\theta')$}
		\STATE Compute $v(S)$, which is the minimum cost to connect $S$ to $s$;
		\ENDFOR
		\FOR{each $i \in g(\theta')$}
		\STATE Compute $x_i(\theta')$,
		\begin{equation}
		\label{eq000}
		x_i(\theta')=\sum_{S \subseteq g(\theta') \backslash \{i\}}\frac{|S|!(|g(\theta')|-|S|-1)!}{|g(\theta')|!}
		    \cdot (v(S \cup \{i\})-v(S));
		\end{equation}
		\ENDFOR
		\FOR{each $i \in V \backslash g(\theta')$}
		\STATE $x_i(\theta')=0$;
		\ENDFOR
		\STATE \textbf{return} $g(\theta'),f(\theta'),x(\theta')$ 
	\end{algorithmic}
\end{algorithm}
\begin{figure}[htb]
    \centering
    \includegraphics[width=7.5cm]{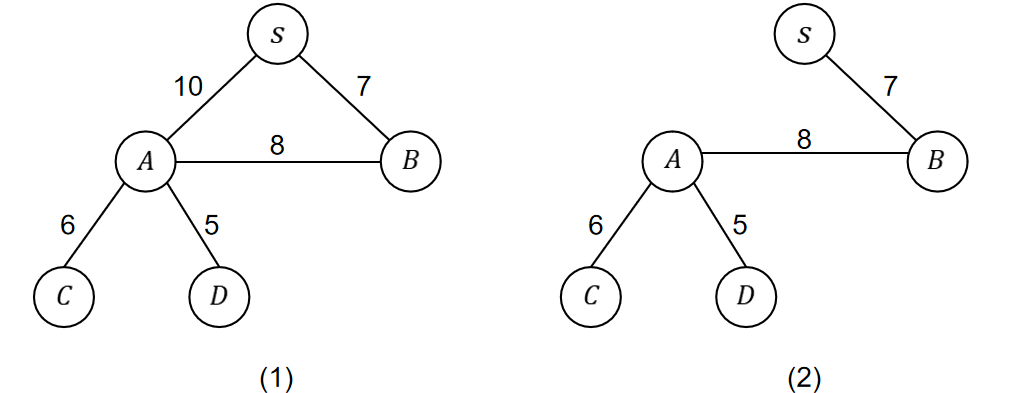}
    \caption{An example of the AMCM on a graph. The left figure is $G_{\theta'}$ and the right figure is the minimum spanning tree of $G_{\theta'}$. The letter $s$ represents the source, the letters $A,B,C,D$ represent the nodes, and the numbers on the edges represent the cost.}
	 \label{1111}
\end{figure}

\begin{example}
\label{exp1}
The graph $G_{\theta'}$ generated by a report profile $\theta' \in \Theta$ is shown in Figure~\ref{1111}(1). From the minimum spanning tree of $G_{\theta'}$ shown in Figure~\ref{1111}(2), we have $$g(\theta')=\{A,B,C,D\}$$ and $$f(\theta')=\{(s,B),(A,B),(A,C),(A,D)\}.$$ Then we compute the value function of each subset of $g(\theta')$. For example, we have $v(\emptyset)=0$, $v(\{A\})=10$, $v(\{A,B\})=15$, $v(\{A,B,C\})=21$ and $v(\{A,B,C,D\})=26$. Finally, by Equation~(\ref{eq000}), the cost share of nodes $A,B,C,D$ is $$x(\theta')=\left(\frac{19}{6},\frac{11}{2},\frac{55}{6},\frac{49}{6}\right).$$
\end{example}

\subsection{Properties of AMCM}
\label{pro}
\begin{theorem}
	\label{thethe3}
	The average marginal cost mechanism satisfies truthfulness. 
\end{theorem}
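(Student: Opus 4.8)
The plan is to show that the Shapley-value-based cost share $x_i$ of AMCM is monotone in the edge set a node reports. The key observation is that $x_i(\theta')$ is exactly the Shapley value of agent $i$ in the cooperative game with value function $v$, where $v(S)$ is the minimum cost of connecting $S$ to $s$ in the induced graph $G_{\theta'}$. So if I can show that when node $i$ shrinks its report from $\theta_i$ to $\theta_i' \subseteq \theta_i$, every marginal contribution term $v(S \cup \{i\}) - v(S)$ appearing in Equation~(\ref{eq000}) weakly increases (for $S \subseteq g \setminus \{i\}$), then $x_i$ weakly increases, which is precisely truthfulness.

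**First I would** fix $\theta_{-i}'$ and compare the two induced graphs $G_{(\theta_i,\theta_{-i}')}$ and $G_{(\theta_i',\theta_{-i}')}$; the latter is obtained from the former by deleting the edges in $\theta_i \setminus \theta_i'$, all of which are incident to $i$. Write $v$ and $\hat v$ for the two value functions. For a set $S$ not containing $i$, note that connecting $S$ to $s$ never uses an edge incident to $i$ (such an edge would leave the set $S\cup\{s\}$), so deleting $i$'s edges cannot change the cheapest way to connect $S$: $v(S) = \hat v(S)$. On the other hand, connecting $S \cup \{i\}$ in the smaller graph is a connection problem on a subgraph, so $\hat v(S \cup \{i\}) \geq v(S \cup \{i\})$. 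Combining, $\hat v(S\cup\{i\}) - \hat v(S) \geq v(S\cup\{i\}) - v(S)$ for every such $S$. Summing against the (nonnegative) Shapley weights gives $x_i((\theta_i',\theta_{-i}')) \geq x_i((\theta_i,\theta_{-i}'))$.

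**The main obstacle** is that the node set $g(\theta')$ itself changes when $i$ cuts edges: $x_i$ in the two scenarios is a weighted sum over subsets of two different ground sets, with different Shapley coefficients $\tfrac{|S|!(|g|-|S|-1)!}{|g|!}$, so the term-by-term comparison above does not immediately apply. The statement of truthfulness only requires the inequality when $i$ is selected in both cases, i.e. $i \in g(\theta_i,\theta_{-i}') \cap g(\theta_i',\theta_{-i}')$, but $g$ can still lose or gain \emph{other} nodes. I would handle this by arguing that a node $j \neq i$ is connectable to $s$ using edges not incident to $i$ if and only if it is in the component of $s$ after deleting $i$'s edges, and showing (using that all other reports are fixed and $i$ remains connected) that the relevant connected component, hence $g$, is in fact unchanged when we only delete edges incident to $i$ — any node that drops out of $g$ would have to route through $i$, but then it could equally route through $i$ in the smaller graph as well, since $i$'s remaining edges plus $i$'s position are the same. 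Making this precise — that $g(\theta_i,\theta_{-i}') = g(\theta_i',\theta_{-i}')$ whenever $i$ stays connected — is the crux; once it is established, the ground set and all Shapley coefficients match and the per-term inequality finishes the proof.

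**Alternatively**, if the node set genuinely can differ, I would fall back on the standard ``potential'' or order-based formulation of the Shapley value: $x_i$ equals the expected marginal contribution $v(P_\pi^i \cup \{i\}) - v(P_\pi^i)$ over a uniformly random ordering $\pi$ of $g$, where $P_\pi^i$ is the set of predecessors of $i$. One can couple the random orderings in the two scenarios and argue the expected marginal contribution only goes up when edges are removed; but this still requires controlling how $g$ changes, so I expect the cleanest route is to first nail down that $g$ is invariant under deleting edges incident to a node that remains connected, and then the argument in the second paragraph applies verbatim.
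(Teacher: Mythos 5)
Your overall strategy (per-term comparison of marginal contributions under the Shapley weights) is the same as the paper's, but two of your key claims are false as stated, and both are exactly the points where the paper's proof does extra work. First, you assert that for $S\not\ni i$ the optimal connection of $S$ to $s$ never uses an edge incident to $i$, hence $v(S)=\hat v(S)$. Under the paper's value function this is wrong: $v(S)$ is the minimum cost of connecting $S$ to $s$ in the whole reported graph, and the optimal connection may pass through nodes outside $S$, including $i$ (e.g.\ in Example~\ref{exep21}, $v(\{B\})=10$ routes through $A$). Consequently cutting $i$'s edges can strictly increase $v(S)$ even when $i\notin S$: with edges $s$--$i$, $i$--$j$ and an expensive edge $s$--$j$, deleting $(i,j)$ raises $v(\{j\})$. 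The needed inequality $\hat v(S\cup\{i\})-\hat v(S)\geq v(S\cup\{i\})-v(S)$ still holds, but your justification breaks down; you need the case split the paper makes: if the cheapest connection of $S$ already visits $i$, then $v(S\cup\{i\})-v(S)=0$ and the misreported marginal is $\geq 0$ by monotonicity of the value function, while if it does not visit $i$, your argument ($v(S)=\hat v(S)$ and $\hat v(S\cup\{i\})\geq v(S\cup\{i\})$) applies.

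Second, the ``crux'' you propose for handling a changing node set --- that $g$ is invariant whenever $i$ itself stays connected --- is false. Take the path $s$--$i$--$j$: if $i$ cuts $(i,j)$, node $i$ remains selected but $j$ drops out of $g$, and $j$ cannot ``equally route through $i$ in the smaller graph'' because the edge it routed through is precisely the one deleted. So $g$ can genuinely shrink (though it can only shrink, never grow, since misreports only remove edges), the ground sets and Shapley coefficients in the two sums really do differ, and your coefficient-matching argument does not close this case; your random-ordering fallback runs into the same problem. The paper instead handles it directly: it observes $g(\theta'')\subsetneq g(\theta')$ and compares the two sums, arguing that terms over subsets of $g(\theta'')$ satisfy the per-term inequality as in the equal-sets case, while the truthful-report terms involving the dropped nodes vanish, so the total can only increase under misreporting. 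To repair your proof you would need either this direct comparison or some other device for reconciling the differing coefficients --- proving invariance of $g$ is not available.
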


\begin{proof}
When $i$ truthfully reports its type, its cost share is
\begin{equation}
    \label{equa0}
		x_i(\theta')=\sum_{S \subseteq g(\theta') \backslash \{i\}}\frac{|S|!(|g(\theta')|-|S|-1)!}{|g(\theta')|!}
		    \cdot (v(S \cup \{i\})-v(S)),
    \end{equation}
		where $\theta'=(\theta_i,\theta_{-i}')$. 
      
      When $i$ misreports its type, its cost share is
     \begin{equation}
     \label{equa1}
		x_i(\theta'')=\sum_{S \subseteq g(\theta'') \backslash \{i\}}\frac{|S|!(|g(\theta'')|-|S|-1)!}{|g(\theta'')|!}
		    \cdot (v'(S \cup \{i\})-v'(S)),
     \end{equation}
	where $\theta''=(\theta_i',\theta_{-i}')$ and $v'(S)$ is the value function of $S$.
		    
    According to the definition of truthfulness, we need to show $x_i(\theta') \leq x_i(\theta'')$.   
    \begin{enumerate}
        \item When $g(\theta')=g(\theta'')$,
    for any given set $S \subseteq V$, we have
    \begin{equation}
    \label{equation00}
		\frac{|S|!(|g(\theta')|-|S|-1)!}{|g(\theta')|!} =\frac{|S|!(|g(\theta'')|-|S|-1)!}{|g(\theta'')|!}.
    \end{equation}
		When $i$ truthfully reports its type, there are two cases for computing $v(S)$. 
\begin{itemize}
    \item Node $i$ is not visited. If $i$ misreports its type (i.e. $\theta_i' \ne \theta_i$), then it is still not visited. So we have $v(S)=v'(S)$. According to the definition of $v$, we have $v'(S \cup \{i\}) \geq v(S \cup \{i\})$. Thus it holds that $v'(S \cup \{i\})-v'(S)\geq v(S \cup \{i\})-v(S)$.
    
    \item Node $i$ is visited. Then we have $v(S \cup \{i\})-v(S)=0$. If $i$ misreports its type, there are two possibilities.
	    \begin{itemize}
	        \item If it is still visited, then $v'(S \cup \{i\})-v'(S)=0$.
	        \item If it is not visited, by the monotonicity of value function, we have $v'(S\cup \{i\})-v'(S) \geq 0$.
	    \end{itemize}
\end{itemize}
	
	Therefore, we have $v'(S\cup \{i\})-v'(S) \geq v(S \cup \{i\})-v(S)$. Recalling to Equation~(\ref{equation00}), we infer that $x_i(\theta') \leq x_i(\theta'')$.
	
	\item When $g(\theta') \ne g(\theta'')$, then $g(\theta'')$ is a proper subset of $g(\theta')$. For each $S \subseteq g(\theta'')$, an argument similar to the one used in the case of $g(\theta')=g(\theta'')$ shows that the corresponding term in Equation~(\ref{equa0}) is less than or equal to that in Equation~(\ref{equa1}). Whereas for each $S \subseteq g(\theta') \backslash g(\theta'')$, the corresponding term in Equation~(\ref{equa0}) equals 0. So we have $x_i(\theta') \leq x_i(\theta'')$.
	\end{enumerate}
	\end{proof}

\begin{theorem}
\label{effi}
The average marginal cost mechanism satisfies budget balance. 
\end{theorem}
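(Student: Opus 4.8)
The plan is to observe that the cost share defined in Equation~(\ref{eq000}) is exactly the Shapley value of the cooperative game $(g(\theta'), v)$, where $v(S)$ is the minimum cost of connecting $S$ to the source $s$. Since the Shapley value is well known to be efficient, i.e. $\sum_{i \in g(\theta')} \mathrm{Sh}_i(v) = v(g(\theta')) - v(\emptyset)$, it suffices to establish two things: first, that $v(\emptyset) = 0$; and second, that $v(g(\theta')) = \sum_{(i,j) \in f(\theta')} c_{(i,j)}$, i.e. the minimum cost of connecting \emph{all} the selected nodes to $s$ equals the total weight of the MST chosen by the algorithm. Nodes outside $g(\theta')$ contribute zero by the last \texttt{for} loop of Algorithm~\ref{alg}, so they do not affect the sum.

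First I would prove efficiency of the Shapley-value expression directly (to keep the paper self-contained), by the standard argument: fix an ordering $\pi$ of the $n = |g(\theta')|$ players, and note that the coefficient $\frac{|S|!(n-|S|-1)!}{n!}$ in Equation~(\ref{eq000}) is the probability that, for a uniformly random ordering, the set of predecessors of $i$ is exactly $S$. Hence $x_i(\theta') = \mathbb{E}_\pi[v(P^\pi_i \cup \{i\}) - v(P^\pi_i)]$, where $P^\pi_i$ is the set of players preceding $i$ in $\pi$. Summing over $i$ inside a fixed ordering telescopes to $v(g(\theta')) - v(\emptyset)$, and taking the expectation preserves this. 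Then I would note $v(\emptyset) = 0$ since connecting the empty set of nodes costs nothing (this is already used implicitly in Example~\ref{exp1}).

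The remaining and more substantive step is to show $v(g(\theta')) = \sum_{(i,j) \in f(\theta')} c_{(i,j)}$. By definition $v(g(\theta'))$ is the minimum cost over all subgraphs of $G_{\theta'}$ connecting every node of $g(\theta')$ to $s$; since $g(\theta')$ and $f(\theta')$ are taken to be the vertex set and edge set of a minimum spanning tree of the connected component of $s$ in $G_{\theta'}$ (Step 2--3 of Algorithm~\ref{alg}), this minimum is precisely the weight of that MST, which is $\sum_{(i,j) \in f(\theta')} c_{(i,j)}$. One has to be slightly careful that ``connect $S$ to $s$'' for $S = g(\theta')$ means connecting all of them simultaneously in one subgraph (a Steiner-tree-type quantity, but here no extra Steiner nodes are available since $g(\theta')$ is already the whole component of $s$), so the optimum is a spanning tree of that component and Prim's algorithm returns one of minimum weight.

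I expect the main obstacle to be purely expository rather than mathematical: making the identification of $x_i$ with the Shapley value and the consistency of the definition of $v(S)$ (what exactly ``minimum cost to connect $S$ to $s$'' allows as intermediate nodes) precise enough that the telescoping sum and the MST-weight identity both go through cleanly. Once the cooperative game $(g(\theta'), v)$ is pinned down with $v(\emptyset)=0$ and $v(g(\theta'))$ equal to the MST weight, budget balance is immediate from efficiency of the Shapley value.
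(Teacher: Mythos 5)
Your proposal is correct and follows essentially the same route as the paper's proof: identify $x_i(\theta')$ in Equation~(\ref{eq000}) as the Shapley value of the game with value function $v$, invoke its efficiency to get $\sum_{i\in g(\theta')}x_i(\theta')=v(g(\theta'))$, and then note that $v(g(\theta'))$ is exactly the weight of the minimum spanning tree selected as $f(\theta')$. The only difference is that you spell out the efficiency argument (random-order telescoping) and the $v(\emptyset)=0$ and Steiner-node caveats explicitly, where the paper simply cites the Shapley efficiency property.
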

\begin{proof}
According to the edge selection policy, our mechanism outputs a spanning tree of the graph induced by $g(\theta') \cup \{s\}$. Given a report profile $\theta' \in \Theta$, by Equation~(\ref{eq000}) and the property of Shapley value, the sum of all nodes' cost share equals $v(g(\theta'))$, i.e. $$\sum_{i \in g(\theta')}x_i(\theta')=v(g(\theta')).$$ From the definition of value function, $v(g(\theta'))$ equals the total cost of minimum spanning tree of the graph induced by $g(\theta')\cup\{s\}$, i.e. $$v(g(\theta'))=\sum_{(i,j) \in f(\theta')}c_{(i,j)}.$$ Thus, $$\sum_{i \in g(\theta')}x_i(\theta')=\sum_{(i,j) \in f(\theta')}c_{(i,j)}.$$ So the proposed mechanism satisfies budget balance.
\end{proof}

\begin{theorem}
\label{}
The average marginal cost mechanism satisfies cost monotonicity.
\end{theorem}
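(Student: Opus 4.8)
The plan is to show that if the cost of a single adjacent edge $(i,j) \in \theta_i'$ increases (keeping the report profile $\theta'$ fixed, and assuming $i$ remains selected), then $x_i$ weakly increases. Write $c$ for the original cost profile and $c^+$ for the profile where $c_{(i,j)}$ has been raised, and let $v$ and $v^+$ be the corresponding value functions (minimum cost to connect a subset $S$ to $s$ in the induced graph). Since the cost share in Equation~(\ref{eq000}) is a fixed nonnegative convex combination of the marginal terms $v(S\cup\{i\})-v(S)$ over $S \subseteq g(\theta')\setminus\{i\}$, it suffices to prove the termwise inequality
\begin{equation}
\label{eq:cm-term}
v^+(S\cup\{i\}) - v^+(S) \;\geq\; v(S\cup\{i\}) - v(S)
\end{equation}
for every $S \subseteq g(\theta')\setminus\{i\}$. (One should first note that raising $c_{(i,j)}$ does not change which node set $g$ is selected — this is part of the hypothesis ``$i$ is still selected'' together with the fact that we are comparing within the same report profile — so the combinatorial coefficients and the index set are identical on both sides.)

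The key observation for \eqref{eq:cm-term} is that the edge whose cost changed is incident to $i$, so it can only ever appear in a minimum-cost connection network for a set $T$ when $i \in T$. First I would establish $v^+(S) = v(S)$: since $i \notin S$, no optimal connection tree for $S$ uses any edge incident to $i$ (an edge incident to $i$ with $i$ a leaf would be wasteful and could be deleted; more carefully, the minimum cost to connect $S$ to $s$ is computed in the subgraph obtained by deleting $i$, as in the definition of $v$), hence the optimal value is unaffected by changing $c_{(i,j)}$. Then I would bound $v^+(S\cup\{i\})$ from below: let $T$ be an optimal network for $S\cup\{i\}$ under the new costs $c^+$. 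Under the old costs, $T$ costs at most $v^+(S\cup\{i\})$ (its cost can only have gone up when we raised $c_{(i,j)}$), and it connects $S\cup\{i\}$, so $v(S\cup\{i\}) \le \mathrm{cost}_c(T) \le \mathrm{cost}_{c^+}(T) = v^+(S\cup\{i\})$. Wait — that inequality goes the wrong way for a lower bound; instead argue directly: $v^+(S\cup\{i\}) - v(S\cup\{i\}) \ge 0$ because any network feasible for one cost profile is feasible for the other and its $c^+$-cost is $\ge$ its $c$-cost, so $v^+(S\cup\{i\}) = \min_T \mathrm{cost}_{c^+}(T) \ge \min_T \mathrm{cost}_c(T) = v(S\cup\{i\})$. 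Combining $v^+(S\cup\{i\}) \ge v(S\cup\{i\})$ with $v^+(S) = v(S)$ gives \eqref{eq:cm-term}.

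Summing \eqref{eq:cm-term} against the fixed nonnegative weights $\frac{|S|!(|g(\theta')|-|S|-1)!}{|g(\theta')|!}$ yields $x_i^+(\theta') \ge x_i(\theta')$, which is exactly cost monotonicity. The main subtlety — the ``hard part'' — is the clean justification that $v^+(S) = v(S)$ whenever $i \notin S$; this rests on the precise reading of the value function $v(S)$ as the minimum cost of connecting $S$ to $s$ using a network that need not (and in an optimal solution, does not) route through or touch $i$, so that the only edge whose weight changed is irrelevant to $v(S)$. Everything else is the observation that increasing any edge weight weakly increases every minimum-cost-connection value, together with the fact that the Shapley-style coefficients and the selected node set $g(\theta')$ are held fixed throughout.
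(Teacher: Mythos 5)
There is a genuine gap, and it sits exactly at the point you flag as the ``hard part.'' Your termwise inequality $v^+(S\cup\{i\})-v^+(S)\geq v(S\cup\{i\})-v(S)$ is proved via the lemma $v^+(S)=v(S)$ for all $S\not\ni i$, justified by reading $v(S)$ as the minimum cost of connecting $S$ to $s$ in the subgraph with $i$ deleted, so that no edge incident to $i$ can matter. That is not the value function this mechanism uses: in Algorithm~\ref{alg}, $v(S)$ is the minimum cost of connecting $S$ to $s$ in $G_{\theta'}$, and the optimal network may route through nodes outside $S$, including $i$ (this is the Steiner-style value, not the Kar-style coalition value). Example~\ref{exep21} shows this concretely: $v(\{B\})=10$ is attained by the path $s\text{--}A\text{--}B$, which visits $A\notin\{B\}$ and uses the edge $(s,A)$ adjacent to $A$; raising $c_{(s,A)}$ strictly increases $v(\{B\})$, so your lemma $v^+(S)=v(S)$ fails whenever the cheapest network for $S$ passes through $i$ via the perturbed edge. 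With only the two facts you actually have in general, $v^+(S)\geq v(S)$ and $v^+(S\cup\{i\})\geq v(S\cup\{i\})$, the termwise inequality does not follow.

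The inequality you want is nevertheless true, and the repair is precisely the case split the paper makes (and that its truthfulness proof, Theorem~\ref{thethe3}, also uses): distinguish whether the optimal network for $S$ under the original costs visits $i$. If it does not visit $i$, then it uses no edge incident to $i$, so its cost is unchanged under $c^+$ and indeed $v^+(S)=v(S)$, while $v^+(S\cup\{i\})\geq v(S\cup\{i\})$, giving your inequality. If it does visit $i$, then that network already connects $i$, so $v(S\cup\{i\})-v(S)=0$, and the new marginal $v^+(S\cup\{i\})-v^+(S)\geq 0$ by monotonicity of $v^+$ in the set, so the term still weakly increases. Summing against the fixed nonnegative Shapley coefficients in Equation~(\ref{eq000}) then concludes, exactly as you do. So your overall architecture (termwise comparison, fixed coefficients, fixed $g(\theta')$) matches the paper; the missing ingredient is the ``$i$ is visited'' case, without which the key lemma as stated is false for this paper's definition of $v$.
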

\begin{proof}
Given $\theta' \in \Theta$ and nodes $i, j \in g(\theta')$, the cost share of $i$ is 
\begin{equation}
x_i(\theta')=\sum_{S \subseteq g(\theta') \backslash \{i\}}\frac{|S|!(|g(\theta')|-|S|-1)!}{|g(\theta')|!} \cdot (v(S \cup \{i\})-v(S)).   
\end{equation}

When $c_{(i,j)}$ does not change, there are two cases for computing $v(S)$.
\begin{itemize}
\item Node $i$ is not visited. If $c_{(i,j)}$ increases, $v(S)$ does not change, and $v(S \cup \{i\})$ increases or does not change. Thus $v(S \cup \{i\})-v(S)$ will increase or not change.

\item Node $i$ is visited. If $c_{(i,j)}$ increases, 
\begin{itemize}
    \item node $i$ is still visited, then $v(S \cup \{i\})-v(S)$ does not change.
    \item node $i$ is not visited, then $v(S \cup \{i\})-v(S)$ increases.  
\end{itemize}
\end{itemize}

In summary, when $c_{(i,j)}$ increases, $v(S \cup \{i\})-v(S)$ increases or does not change, and so does $x_i(\theta')$.
\end{proof}

\begin{theorem}
\label{}
The average marginal cost mechanism satisfies positiveness.
\end{theorem}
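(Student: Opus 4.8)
The plan is to show that each term in the Shapley-value expression for $x_i(\theta')$ is non-negative, which immediately gives $x_i(\theta') \geq 0$, and to handle nodes outside $g(\theta')$ separately. For $i \in V \backslash g(\theta')$ we have $x_i(\theta')=0$ by the last \textbf{for} loop of Algorithm~\ref{alg}, so positiveness trivially holds there. For $i \in g(\theta')$, recall that
\begin{equation*}
x_i(\theta')=\sum_{S \subseteq g(\theta') \backslash \{i\}}\frac{|S|!(|g(\theta')|-|S|-1)!}{|g(\theta')|!} \cdot (v(S \cup \{i\})-v(S)).
\end{equation*}
Each coefficient $\frac{|S|!(|g(\theta')|-|S|-1)!}{|g(\theta')|!}$ is a ratio of factorials of non-negative integers, hence non-negative (in fact positive). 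So it suffices to argue that every marginal contribution $v(S \cup \{i\})-v(S)$ is non-negative.

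The key step is therefore the monotonicity of the value function $v$: for any $S \subseteq T \subseteq g(\theta')$, we have $v(S) \leq v(T)$. This is intuitively because connecting a superset of nodes to the source can only cost more, and it is the same fact already invoked (under the name ``monotonicity of value function'') in the proofs of Theorems~\ref{thethe3} and the cost-monotonicity theorem. I would prove it by a direct argument: take an optimal network (a tree) realizing $v(T)$; since it connects every node of $T$ to $s$, and $S \subseteq T$, this same network connects every node of $S$ to $s$, so it is a feasible (not necessarily optimal) network for $S$, giving $v(S) \leq (\text{cost of this network}) = v(T)$. Applying this with $T = S \cup \{i\}$ yields $v(S \cup \{i\}) - v(S) \geq 0$ for every $S \subseteq g(\theta') \backslash \{i\}$.

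Combining the two observations, $x_i(\theta')$ is a sum of products of non-negative coefficients and non-negative marginal contributions, hence $x_i(\theta') \geq 0$ for all $i \in g(\theta')$; together with $x_i(\theta')=0$ for $i \notin g(\theta')$, this establishes positiveness for all $i \in V$ and all $\theta' \in \Theta$.

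I do not expect any real obstacle here; the only subtlety worth stating carefully is the definition of $v$ on subsets that may not be ``reachable'' through the induced graph $G_{\theta'}$ — but since $S \subseteq g(\theta')$ and all nodes of $g(\theta')$ lie in the MST of $G_{\theta'}$, every such $S$ is connectable to $s$ within $G_{\theta'}$, so $v(S)$ is finite and the monotonicity argument applies without edge cases. The main ``work,'' such as it is, is simply making explicit the feasible-network argument for monotonicity of $v$, which the earlier proofs used implicitly.
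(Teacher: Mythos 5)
Your proposal is correct and follows essentially the same route as the paper's proof: zero shares for unselected nodes, non-negative Shapley coefficients, and non-negative marginal contributions $v(S\cup\{i\})-v(S)\geq 0$. The only difference is that you spell out the feasible-network argument for monotonicity of $v$, which the paper simply asserts ``by the definition of the value function''; this is a harmless (and indeed welcome) elaboration, not a different approach.
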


\begin{proof}
Given a report profile $\theta' \in \Theta$, for each $i \in V\backslash g(\theta')$, we have $x_i(\theta')=0$. For each $i \in g(\theta')$, by the definition of value function in Algorithm~\ref{alg}, we get $v(S \cup \{i\})-v(S) \geq 0$ for each $S \subseteq g(\theta')$. Therefore, according to Equation~(\ref{eq000}), each node's cost share $x_i(\theta')$ is non-negative.
\end{proof}

In the following example, we run the AMCM and the algorithm proposed by Claus and Kleitman~\cite{claus1973cost} on a tree. Note that the latter is only applied to trees, whereas our mechanism is applied to general graphs. 

\begin{figure}[htb]
    \centering
    \includegraphics[width=2.5cm]{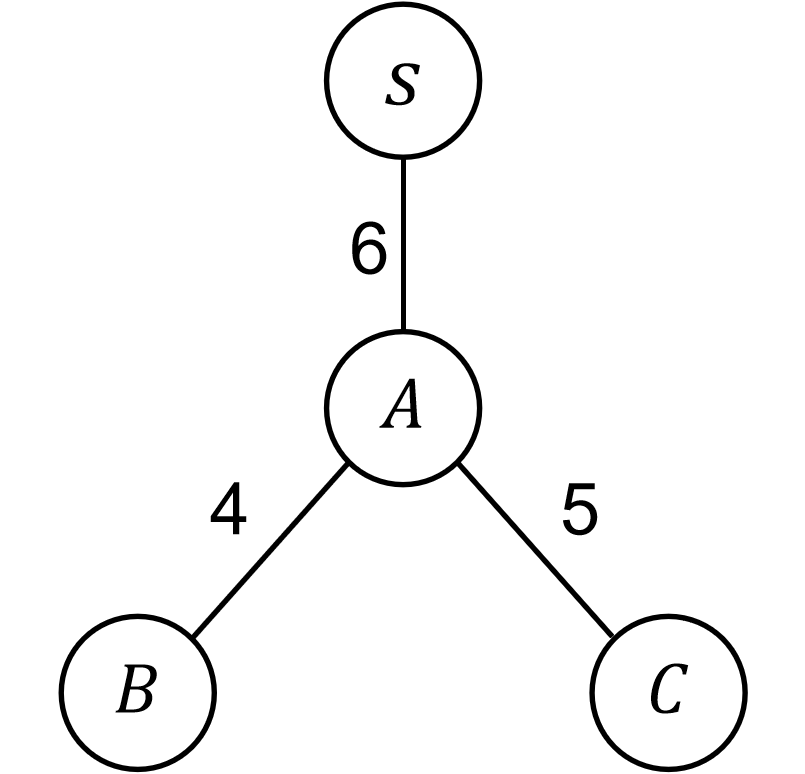}
     \caption{An example on a tree, where $s$ is the source and the numbers on the edges represent the cost.}
	 \label{Explanation1}
\end{figure}

\begin{example}
\label{exep21}
Figure~\ref{Explanation1} is generated by $\theta' \in \Theta$. We have $g(\theta')=\{A,B,C\}$. Then it follows that $v(\{A\})=6$, $v(\{B\})=10$, $v(\{C\})=11$, $v(\{A,B\})=10$, $v(\{A,C\})=11$, $v(\{B,C\})=15$, $v(\{A,B,C\})=15$ and $v(\emptyset)=0$. By Equation~(\ref{eq000}), we get $x_A(\theta')=2$, $x_B(\theta')=6$ and $x_C(\theta')=7$. Next, we compute each node's cost share under the algorithm proposed by Claus and Kleitman~\cite{claus1973cost}. According to the algorithm, the cost of each edge on the tree is equally shared among all nodes using it to connect to the source. Specifically, the cost of the edge $(s,A)$ is equally shared among $A,B,C$. The cost of the edge $(A,B)$ is shared by $B$. The cost of the edge $(A,C)$ is shared by $C$. Therefore, the cost share of $B$ is $x_B(\theta')=4+\frac{1}{3}\times 6=6$, the cost share of $C$ is $x_C(\theta')=5+\frac{1}{3}\times 6=7$ and the cost share of $A$ is $x_A(\theta')=\frac{1}{3}\times 6=2$.
\end{example}

In the above example, the cost share under AMCM equals that under the algorithm proposed by Claus and Kleitman~\cite{claus1973cost}. Actually, the two methods of computing cost share are equivalent on trees.

\begin{theorem}
\label{equva}
Each node's cost share under the AMCM coincides with that under the algorithm proposed by Claus and Kleitman~\cite{claus1973cost} on trees.
\end{theorem}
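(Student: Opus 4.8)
The plan is to show that on a tree, the Shapley value computation in Equation~(\ref{eq000}) reduces exactly to the ``equal division of each edge's cost among its downstream users'' rule of Claus and Kleitman. The starting point is the observation that when $G_{\theta'}$ is a tree rooted at $s$, the value function $v$ has a very clean structure: for any $S \subseteq g(\theta')$, the cheapest way to connect $S$ to $s$ (using only nodes that may be visited along the way) is to take the union of the unique $s$-to-$i$ paths for every $i \in S$. Hence $v(S) = \sum_{e} c_e$ where the sum ranges over those edges $e$ lying on at least one such path. Equivalently, writing $D_e \subseteq g(\theta')$ for the set of nodes whose path to $s$ uses edge $e$ (the ``downstream'' nodes of $e$), we have $v(S) = \sum_{e} c_e \cdot \mathbf{1}[S \cap D_e \neq \emptyset]$.

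The second step is to exploit linearity of the Shapley value over this decomposition. Each indicator term $\mathbf{1}[S \cap D_e \neq \emptyset]$ is (up to the constant $c_e$) the characteristic function of a simple unanimity-type game, namely $1$ minus the unanimity game on the complement; its Shapley value is well known to split $c_e$ equally among the members of $D_e$ and give $0$ to everyone else. So the first plan is: invoke the additivity axiom of the Shapley value, compute the Shapley value of each ``edge game'' $c_e \cdot \mathbf{1}[S \cap D_e \neq \emptyset]$ separately, and conclude that node $i$'s total Shapley share is $\sum_{e : i \in D_e} c_e / |D_e|$. But $i \in D_e$ exactly when $e$ lies on $i$'s path to $s$, and $|D_e|$ is precisely the number of nodes using $e$ to connect to the source, so this is verbatim the Claus--Kleitman allocation.

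Alternatively, if one prefers to avoid citing Shapley-value axioms, the same identity can be proven directly from Equation~(\ref{eq000}): fix an edge $e$ and a node $i \in D_e$, and count the marginal contribution $v(S \cup \{i\}) - v(S)$ restricted to edge $e$'s contribution $c_e$; this marginal is $c_e$ precisely when $S \cap D_e = \emptyset$ (i.e.\ $i$ is the first node of $D_e$ to arrive in the random ordering) and $0$ otherwise. Summing the Shapley weights $\frac{|S|!(n-|S|-1)!}{n!}$ over all $S \subseteq g(\theta') \setminus \{i\}$ with $S \cap D_e = \emptyset$ gives the probability that $i$ precedes all other members of $D_e$, which by symmetry is $1/|D_e|$. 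Doing this for every edge on $i$'s path and summing yields the claim.

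The main obstacle is the first step: justifying rigorously that $v(S)$ on a tree equals the weight of the union of the $s$--$i$ paths. One has to argue that no cheaper connection exists even though intermediate (non-$S$) nodes may be ``used'' or ``visited'' for free in the model's definition of $v$ — here the fact that the underlying graph is a tree is essential, since the $s$--$i$ path is unique and any connected subgraph of a tree containing $s$ and all of $S$ must contain every edge on every such path, so the union of paths is simultaneously feasible and a lower bound. Once this structural lemma is in hand, the rest is the routine linearity/symmetry argument sketched above.
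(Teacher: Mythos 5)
Your proposal is correct, and it takes a genuinely different route from the paper's own proof. The paper argues computationally: it first evaluates the Shapley formula of Equation~(\ref{eq000}) explicitly on a line, enumerating subsets $S$ by their largest element and summing the factorial weights to get $\phi_i=\frac{c_{(s,1)}}{|V|}+\frac{c_{(1,2)}}{|V|-1}+\cdots+\frac{c_{(i-1,i)}}{|V|-i+1}$, and then extends this to general trees by induction on the number of nodes lying off the $s$--$i$ path, absorbing each added node through a weight identity of the form $A+B=\frac{|S|!\,(|V|+k-1-|S|)!}{(|V|+k)!}$. You instead isolate the structural lemma that on a tree $v(S)=\sum_{e}c_e\cdot\mathbf{1}[S\cap D_e\neq\emptyset]$ (sound, since the $s$--$i$ path is unique and any connected subgraph containing $s$ and $S$ must contain all such paths, while their union is feasible), and then use linearity of the Shapley value over this edge decomposition: each edge game gives $c_e/|D_e|$ to every $i\in D_e$ and zero to others, which you also verify directly by the first-arrival probability computation $\sum_{S\cap D_e=\emptyset}\frac{|S|!(n-|S|-1)!}{n!}=1/|D_e|$. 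Since $|D_e|$ is exactly the number of nodes using $e$ to reach the source, this is verbatim the Claus--Kleitman allocation. Your route avoids both the factorial bookkeeping and the induction, covers all trees in one stroke, and makes the reason for the coincidence transparent (additivity plus symmetry within each $D_e$); the paper's route is more elementary in that it uses nothing beyond the defining formula, but it is longer and its inductive step is harder to check. The only point worth making explicit in a polished write-up is that on a tree the mechanism selects every node, i.e.\ $g(\theta')=V$, so the player set in Equation~(\ref{eq000}) is the full node set and your $D_e$ coincides with the set of users of $e$ in the Claus--Kleitman rule.
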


\begin{proof}
First, we prove that the statement holds on a line. Second, we extend the proof to a tree.

\begin{enumerate}
    \item For a given set $S \subseteq V$ on a line, if $S$ has the elements of $\{i,i+1,\cdots,|V|\}$, we have $v(S \cup \{i\})-v(S)=0$. Otherwise, $S \subseteq \{1,2,\cdots,i-1\}$.
 \begin{itemize}
     \item If $S=\emptyset$, then we have $$\phi_i=\frac{1}{|V|} (c_{s,1}+\cdots +c_{i-1,i}).$$
     \item If $S=j$ $(1 \leq j \leq i-1)$, then we have $$\phi_i=\sum_{j=1}^{i-1}\frac{1}{|V|(|V|-1)}(c_{(j,j+1)}+\cdots+c_{(i-1,i)}).$$
     \item If the largest element in $S$ is $2$, then $S=\{1,2\}$. Then $$\phi_i=\frac{2!(|V|-3)!}{|V|!}(c_{(2,3)}+\cdots+c_{(i-1,i)}).$$
     \item $\cdots$
     \item If the largest element in $S$ is $i-1$, we have $2 \leq |S| \leq i-1$. The $\phi_i$ is $$[\binom{1}{i-2}\frac{2!(|V|-3)!}{|V|!}+\cdots+\binom{i-2}{i-2}\frac{(i-1)!(|V|-i)!}{|V|!}]c_{(i-1,i)}.$$
 \end{itemize}

Putting the above analysis together, we obtain
 \begin{equation}
 \begin{split}
 \phi_i&=\frac{1}{|V|}\cdot (c_{(s,1)}+\cdots+c_{(i-1,i)})+\cdots+[\binom{1}{i-2}\frac{2!(|V|-3)!}{|V|!}+\\
 &\cdots+\binom{i-2}{i-2}\frac{(i-1)!(|V|-i)!}{|V|!}]c_{(i-1,i)}\\
 %&=\frac{1}{|V|}\cdot (c_{(s,1)}+\cdots +c_{(i-1,i)})+\cdots+
 %[\frac{1}{|V|(|V|-1)}+\cdots+\\
 %&\binom{i-2}{i-2}\frac{(i-1)!(|V|-i)!}{|V|!}]c_{(i-1,i)}\\
 &=\frac{1}{|V|}c_{(s,1)}+(\frac{1}{n}+\frac{1}{|V|(|V|-1)})c_{(1,2)}+[\frac{1}{|V|}+\frac{1}{|V|(|V|-1)}+\\
 &\frac{1}{|V|(|V|-1)}+\frac{2!(|V|-3)!}{|V|!}]c_{(2,3)}+\cdots\\
 &=\frac{c_{(s,1)}}{|V|}+\frac{c_{(1,2)}}{|V|-1}+\frac{c_{(2,3)}}{|V|-2}+\cdots+\frac{c_{(i-1,i)}}{|V|-i+1}.    
 \end{split}
 \end{equation}

Therefore, we prove that the statement holds on a line. 
\item Consider a scenario where the nodes and their edges form a tree. We denote the path that node $i$ locates on is $l$ and there are $n$ nodes on this path. Denote $\overline{l}$ as the set of nodes that do not belong to this path. Then regarding the number of nodes in $\overline{l}$, we use mathematical induction to prove the statement.
 \begin{itemize}
     \item If $|\overline{l}|=1$, we assume $l=\{a\}$. Thus we have $$V=\{1,\cdots,i,\cdots,n,a\}.$$We let $N_i=\{1,\cdots,i-1,i+1,\cdots,n,a\}$. So all the subsets of $N_i$ can be divided into two parts: the subsets of $l_i=\{1,\cdots,i-1,i+1,\cdots,n\}$ and the set formed by adding $a$ to each subset of $l_i$. The power set of $l_i$ is denoted by $2^{l_i}=\{\emptyset, \{1\}, \cdots, \{i-1\}, \{i+1\},\cdots, \{n\}, \cdots, \{1,\cdots,i-1,i+1,\cdots,n\}\}$. The set formed by adding $a$ to each subset of $l_i$ is denoted by $(2^{l_i})^C=\{\{a\}, \{1,a\}, \cdots, \{i-1,a\}, \{i+1,a\},\cdots, \{n,a\}, \cdots, \{1,\cdots,i-1,i+1,\cdots,n,a\}\}$. Next, we analyze each element of $2^{l_i}$ and compute $\phi_i$.
     \begin{itemize}
         \item For $\emptyset$: we have $$\phi_i=\frac{1}{|V|+1}(c_{(s,1)}+\cdots+c_{(i-1,i)}).$$
         \item $\cdots$
         \item For $\{i-1\}$: we have $$\phi_i=\frac{1}{(|V|+1)|V|}c_{(i-1,i)}.$$ 
     \end{itemize}
     For other subsets, we use the following method to compute $\phi_i$.
     \begin{itemize}
         \item The subset with $2$ as the largest element, i.e. $\{1,2\}$. We have $$\phi_i=\frac{2!(|V|-2)!}{(|V|+1)!}(c_{(2,3)}+\cdots+c_{(i-1,i)}).$$
         \item $\cdots$
         \item The subset with $i-1$ as the largest element. We have $$\phi_i=[\binom{1}{i-2}\frac{2!(|V|-2)!}{(|V|+1)!}+\cdots+\binom{i-2}{i-2}\frac{(i-1)!(|V|-i+1)!}{(|V|+1)!}]c_{(i-1,i)}.$$ 
     \end{itemize}
    
     Therefore, the sum of $\phi_i$ is:
     \begin{equation}
         \begin{split}
             &\frac{1}{|V|+1}(c_{(s,1)}+\cdots+c_{(i-1,i)})+\cdots+
             [\binom{1}{i-2}\frac{2!(|V|-2)!}{(|V|+1)!}+\\
             &\cdots+\binom{i-2}{i-2}\frac{(i-1)!(|V|-i)!}{(|V|+1)!}]c_{(i-1,i)}\\
             %&=\frac{1}{(|V|+1)}c_{(s,1)}+[\frac{1}{|V|+1}+\frac{1}{|V|(|V|+1)}]c_{(1,2)}+[\frac{1}{|V|+1}+\\
             %&\frac{2}{|V|(|V|+1)}+\frac{2!}{|V|(|V|-1)(|V|+1)}]     c_{(2,3)}+\cdots\\
             &=\frac{1}{|V|+1}c_{(s,1)}+\frac{1}{|V|}c_{(1,2)}+\frac{1}{|V|-1}c_{(2,3)}+\cdots+\frac{1}{|V|+2-i}c_{(i-1,i)}.
         \end{split}
     \end{equation}
     For all the subsets of $(2^{l_i})^C$, similarly, we get that the sum of $\phi_i$ is:
     \begin{equation}
         \begin{split}
          &\frac{1}{|V|(|V|+1)}c_{(s,1)}+[\frac{1}{|V|(|V|+1)}+\frac{2}{|V|(|V|-1)(|V|+1)}]c_{(1,2)}+\cdots\\
          &=\frac{1}{|V|(|V|+1)}c_{(s,1)}+\frac{1}{|V|(|V|-1)}c_{(1,2)}+\cdots.
         \end{split}
     \end{equation}
    
     At last, we have:
     \begin{equation}
         \begin{split}
             \phi_i&=[\frac{1}{|V|+1}+\frac{1}{|V|(|V|+1)}]c_{(s,1)}+[\frac{1}{|V|}+\frac{1}{|V|(|V|-1)}]\\
             &c_{(1,2)}+[\frac{1}{|V|-1}+\frac{1}{(|V|-1)(|V|-2)}]           c_{(2,3)}+\cdots\\
             &=\frac{c_{(s,1)}}{|V|}+\frac{c_{(1,2)}}{|V|-1}+\cdots+\frac{c_{(i-1,i)}}{|V|-i+1}.
         \end{split}
     \end{equation}
     That is, when $|\overline{l}|=1$, the statement holds.  
     \item Assume that the statement holds for $|\overline{l}|=k$. We need to prove that it holds for $|\overline{l}|=k+1$. We assume that the newly added node is $b$ and the set of nodes not belonging to the set $\{1,\cdots,n,b\}$ is $\overline{l_k}$. For each node $i$, we have
     \begin{equation}
     \begin{split}
     \phi_i&=\sum_{S \subseteq (l \cup \overline{l_k} \backslash  \{i\})}\frac{|S|!(|V|+k-|S|)!}{(k+1+|V|)!}(v(S \cup \{i\})-v(S))+\\
     &\sum_{S \cup \{b\}}\frac{(|S|+1)!(|V|+k-|S|-1)!}{(k+1+|V|)!}(v(S \cup \{i\})-v(S)). 
     \end{split}
     \end{equation}
     Let $$A=\frac{|S|!(|V|+k-|S|)!}{(k+1+|V|)!}$$ and $$B=\frac{(|S|+1)!(|V|+k-|S|-1)!}{(k+1+|V|)!}.$$Then we get $$A+B=\frac{|S|!(|V|+k-1-|S|)!}{(|V|+k)!}.$$Therefore, $$\phi_i=\sum_{S \subseteq (l \cup \overline{l_k} \backslash \{i\})}\frac{|S|!(|V|+k-|S|-1)!}{(k+|V|)!}(v(S \cup \{i\})-v(S)).$$ According to inductive hypothesis, we get 
     \begin{equation}
     \begin{split}
    &\sum_{S \subseteq (l \cup \overline{l_k} \backslash \{i\})}\frac{|S|!(|V|+k-|S|-1)!}{(k+|V|)!}(v(S \cup \{i\})-v(S))\\
    &=\frac{c_{(s,1)}}{|V|}+\frac{c_{(1,2)}}{|V|-1}+\frac{c_{(2,3)}}{|V|-2}+\cdots.
     \end{split}
     \end{equation}
     Thus we have $$\phi_i=\frac{c_{(s,1)}}{|V|}+\frac{c_{(1,2)}}{|V|-1}+\frac{c_{(2,3)}}{|V|-2}+\cdots.$$ That is, the statement holds for $|\overline{l}|=k+1$.
 \end{itemize}
\end{enumerate}
\end{proof}

\section{Cost Sharing with Limited Budget}
\label{csm-lb}
Now we consider the situation that each node has a limited budget. We first show that the AMCM defined in Algorithm~\ref{alg} does not satisfy budget feasibility by the example illustrated in Figure~\ref{tu1}. For nodes $C$ and $D$, since $x_C(\theta')=\frac{55}{6}>7= B_C$ and $x_D(\theta')=\frac{49}{6}>6=B_D$, budget feasibility is violated. So in this section, we propose a new cost sharing mechanism to satisfy budget feasibility.

For the mechanism to satisfy budget feasibility (i.e. the budget of each node should cover its cost share), we design a special node selection function defined in Algorithm~\ref{select}. When given a graph generated by a report profile and a set of nodes, the function outputs a subset of nodes connected to the source. To satisfy budget feasibility, we have to remove those nodes from the final connection whose budgets are lower than their cost share. In other words, a node is added to the group of nodes that are connected to the source if its budget covers the minimal cost of the edges connecting to the group. Following this principle, starting from $s$, we gradually add nodes into the connected group until no more nodes can be added.

\begin{algorithm}[htb]
	\caption{Selection Function $g^*$($\theta'$,$S$)}
	\label{select}

\textbf{Input}: 
A report profile $\theta' \in \Theta$,
a set $S \subseteq V$\\   

\textbf{Output}:
A subset of $S$ connected to the source 
	\begin{algorithmic}[1]
        \STATE Generate $G_{\theta'}$;
	    \STATE Initialize $N=$ $\{s\}$;
	    \STATE Set $\hat{E} = \{(p,q)| p \in N, q \in S\} \subseteq E_{\theta'}$;
	    \WHILE{$\hat{E}$ is not empty}
	   % \STATE Sort all edges in $E_t$ by their cost in ascending order;
	    \STATE Find the edge $(i,j)$ in $\hat{E}$ with the smallest weight where $i \in N$ and $j \in S \backslash N$;
	    \IF{$B_j \geq c_{(i,j)}$}
	    \STATE $N$ $=$ $N \cup \{j\}$;
	    \STATE $\hat{E} = \{(p,q)| p \in N, q \in S \backslash N\}$;
	   % \FOR{$l \in S \backslash N$}
	   % \STATE Update and add the edges to $E_t$ from(N,V-N) (i.e. $E_t =$ $E_t \cup \{(k,l)\}$);
	   % \ENDFOR
	   % \STATE Remove the edges between nodes in $E_t$ (i.e. $E_t =$ $E_t \backslash (E_t \cap \theta_k')$);
	    \ELSE 
	    \STATE $\hat{E} =$ $\hat{E} \backslash \{(i,j)\}$;
	    \ENDIF
	    \ENDWHILE
	    \STATE \textbf{return} $N \backslash \{s\}$
	\end{algorithmic}
\end{algorithm}

\begin{algorithm}[tb]
	\caption{Saving-based Cost Sharing Mechanism (SCSM)}
	\label{budget}

\textbf{Input}:  
A report profile $\theta' \in \Theta$ 

\textbf{Output}:
The node selection $g(\theta')$, the edge selection $f(\theta')$, 
the cost share $x(\theta')$
	\begin{algorithmic}[1]
	    \STATE Set $g(\theta')$ = $g^*(\theta',V$);
	    \FOR{each $S \subseteq g(\theta')$}
	    \STATE Let $S' = g^*(\theta',S$);
	    \STATE Set 
     \begin{equation}
     \label{sss}
       v(S)=\sum_{i \in S'}B_i-C(S'),  
     \end{equation}
     where $C(S')$ is the minimal cost of connecting $S'$ to $\{s\}$;
	    \ENDFOR
	    \FOR{each $i \in g(\theta')$}
        \STATE Compute $\phi_i$, 
        \begin{equation}
        \label{scs1}
		\phi_i=\sum_{S \subseteq g(\theta') \backslash \{i\}}\frac{|S|!(|g(\theta')|-|S|-1)!}{|g(\theta')|!}
		    \cdot (v(S \cup \{i\})-v(S));  
        \end{equation}
	    \STATE Let 
        \begin{equation}
        \label{scs2}
                x_i(\theta')=B_i-\phi_i; 
        \end{equation}     
	    \ENDFOR
	    \FOR{each $i \in V\backslash g(\theta')$}
	    \STATE $x_i(\theta')=0$;
	    \ENDFOR
	    \STATE Set $f(\theta')=E^{MST}_{g(\theta')}$, where $E^{MST}_{g(\theta')}$ is the set of edges of minimum spanning tree of the graph induced by $g(\theta') \cup \{s\}$;
		\STATE \textbf{return} $g(\theta'),f(\theta'),x(\theta')$ 
	\end{algorithmic}
\end{algorithm}

\begin{figure}
    \centering
    \includegraphics[width=7.5cm]{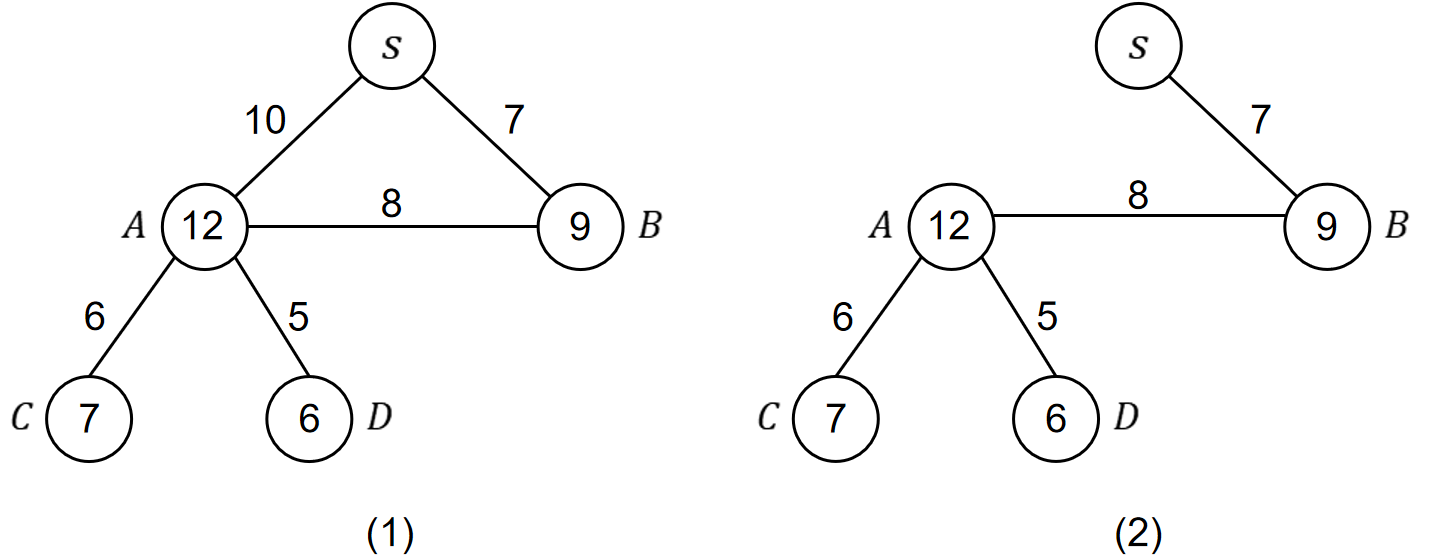}
     \caption{An example of the SCSM on a graph. The left figure is $G_{\theta'}$ and the right figure is the minimum spanning tree of $G_{\theta'}$. The letter $s$ is the source, the numbers on the edges represent the cost and the number in the circle represents the budget of the node. %We have $V=\{A,B,C,D\}$ and %$g(\theta')=\{A,B,C,D\}$.
     }
	 \label{tu1}
\end{figure}

After choosing the connected nodes, can we simply apply the AMCM to compute their cost share to meet their budgets? The answer is no. The example given in Figure~\ref{tu1} can be used to explain the reason. By Algorithm~\ref{select}, nodes $C$ and $D$ are selected, but according to the definition of AMCM, they will pay more than their budgets. 

To satisfy budget feasibility, we incorporate the nodes' budgets into the computation of their cost share. For the set of selected nodes, we compute its savings, which are the difference between their budgets and the total cost. Then we allocate the savings among the nodes in the set so that those nodes with lower budgets have lower cost shares. 

The new mechanism is called saving-based cost sharing mechanism (SCSM) and is formally described in Algorithm~\ref{budget}. And, it is illustrated by Example~\ref{exep2}.

\begin{example}
\label{exep2}
The graph $G_{\theta'}$ generated by a report profile $\theta' \in \Theta$ is shown in Figure~\ref{tu1}(1). According to Algorithm~\ref{select}, first we compare $B_B$ with $c_{(s,B)}$. Since $B_B > c_{(s,B)}$, node $B$ is selected by the mechanism. Second, since $B_A > c_{(A,B)}$, node $A$ is selected by the mechanism. Similarly, nodes $C$ and $D$ are also selected by the mechanism. So the set of nodes selected by the mechanism is $g(\theta')=\{A,B,C,D\}$ (see Figure~\ref{tu1}(2)). For each $S \subseteq g(\theta')$, we compute $g^*(\theta',S)$ by Algorithm~\ref{select}. For example, when $S=\{B,C\}$, we have $g^*(\theta',S)=\{B\}$. Further, by Equation~(\ref{sss}), we get the value function of $S$. For example, we have $v(\{B,C\})=\sum_{i \in g^*(\theta',\{B,C\})}B_i-C(g^*(\theta',\{B,C\}))=2$. By Equation~(\ref{scs1}) and Equation~(\ref{scs2}), we compute the cost shares of $A,B,C,D$, i.e. $x(\theta')=(8,6,6.5,5.5)$. From Figure~\ref{tu1}(2), the set of selected edges is $$f(\theta')=\{(s,B),(A,B),(A,C),(A,D)\}.$$
\end{example}

In the following example, we run the SCSM and another method using savings on a tree.
\begin{figure}[htb]
    \centering
    \includegraphics[width=3cm]{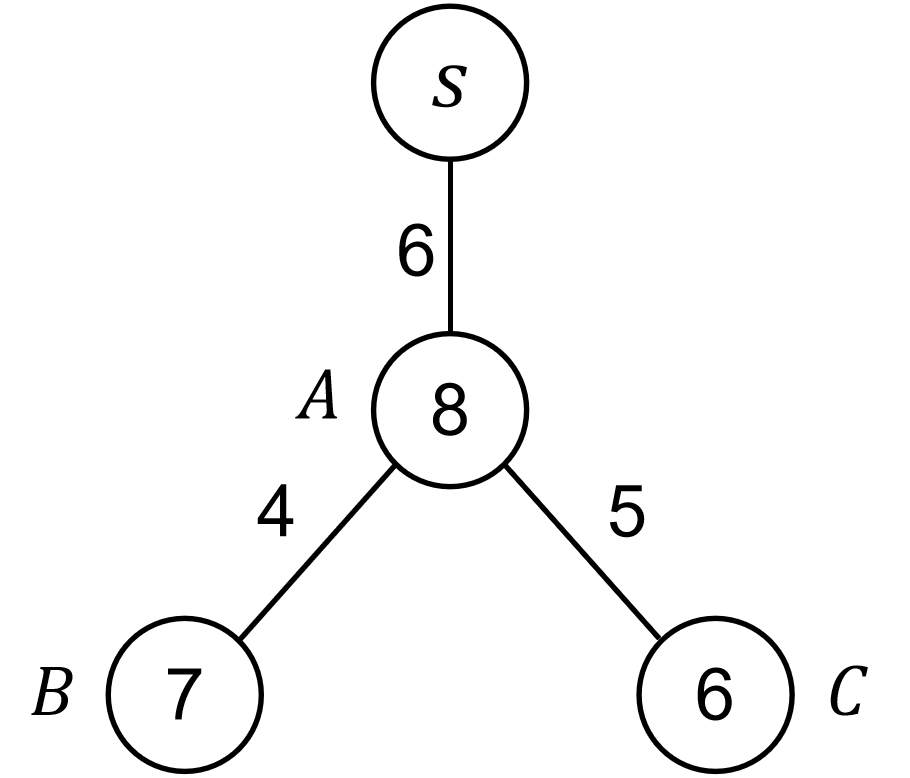}
     \caption{An example on a tree, where $s$ is the source, the numbers on the edges represent the cost and the number in each circle represents the budget of the node.}
	 \label{Explanation2}
\end{figure}
\begin{example}
\label{exep22}
Figure~\ref{Explanation2} is generated by a report profile $\theta' \in \Theta$. Using Algorithm~\ref{select}, we get $g(\theta')=\{A,B,C\}$. It follows that $v(\emptyset)=0$, $v(\{A\})=2$, $v(\{B\})=0$, $v(\{C\})=0$, $v(\{A,B\})=5$, $v(\{A,C\})=3$, $v(\{B,C\})=0$ and $v(\{A,B,C\})=6$. So we have $\phi_A=4, \phi_B=1.5 ,\phi_C=0.5$, and thus $x_A(\theta')=4$, $x_B(\theta')=5.5$, $x_C(\theta')=5.5$. Then we compute each node's cost share using its savings. For node $B$, its savings $B_B-c_{(A,B)}$ is shared among $A$ and $B$ equally. Thus, its cost share is $x_B(\theta') = B_B-\frac{1}{2}\cdot (B_B-c_{(A,B)})=7-\frac{7-4}{2} =5.5$. Similarly, we have $x_C(\theta') = B_C-\frac{1}{2}\cdot (B_C-c_{(A,C)})=6-\frac{6-5}{2} = 5.5$. For node $A$, its savings are not shared among other nodes. So its cost share is $x_A(\theta') = B_A-(B_A-c_{s,A})-\frac{1}{2}\cdot (B_C-c_{(A,C)})-\frac{1}{2}\cdot (B_B-c_{(A,B)})=8-(8-6)-\frac{6-5}{2}-\frac{7-4}{2}=4$. 
\end{example}

In the above example, the cost share under SCSM equals that under the method using savings. In fact, the two methods of computing cost share are equivalent on trees. 

Next, we compare the AMCM with the SCSM. Under AMCM, each node should share the total cost of the path from itself to the source. Since the AMCM does not consider its budget at all, its cost share might be larger than its budget. Under SCSM, each node's savings (the difference between its budget and the cost of the edge connecting it to its parent) should be shared among all its ancestors. In the worst case, the savings allocated to a node is zero. Therefore, each node's cost share is not more than its budget.

\subsection{Properties of SCSM}
\begin{theorem}
\label{truthful11}
The saving-based cost sharing mechanism satisfies truthfulness.
\end{theorem}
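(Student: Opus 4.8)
The plan is to mirror the proof of Theorem~\ref{thethe3}, with one twist that turns out to be the real difficulty. Since every budget $B_i$ is a fixed public number and, by Equation~(\ref{scs2}), a selected node's cost share is $x_i(\theta')=B_i-\phi_i(\theta')$, truthfulness is equivalent to the statement that reporting the true type weakly maximizes $\phi_i$. So, writing $\theta'=(\theta_i,\theta'_{-i})$ and $\theta''=(\theta'_i,\theta'_{-i})$ with $\theta'_i\subseteq\theta_i$, and assuming $i\in g(\theta')\cap g(\theta'')$, I would show $\phi_i(\theta')\ge\phi_i(\theta'')$.

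First I would isolate three structural facts. (i) The selection function $g^*$ is monotone: since $E_{\theta''}\subseteq E_{\theta'}$, any node the greedy loop of Algorithm~\ref{select} can attach using $E_{\theta''}$ it can also attach using $E_{\theta'}$ (an edge $(p,j)$ with $B_j\ge c_{(p,j)}$ that keeps reappearing in $\hat E$ is eventually picked), so $g^*(\theta'',S)\subseteq g^*(\theta',S)$ for all $S$ and in particular $g(\theta'')\subseteq g(\theta')$; likewise $S_1\subseteq S_2$ implies $g^*(\theta',S_1)\subseteq g^*(\theta',S_2)$. (ii) The saving game $v$ of Equation~(\ref{sss}) satisfies $v(\emptyset)=0$ and is monotone both in its set argument and under enlarging the reported edge set. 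The estimate behind (ii): if $A\subseteq B$ are the outputs of $g^*$ on two nested inputs (nested node sets, or the same node set under $E_{\theta''}\subseteq E_{\theta'}$), then the edges by which the greedy loop attaches the nodes of $B\backslash A$, together with a minimum spanning tree of $A\cup\{s\}$, form a spanning tree of $B\cup\{s\}$ of weight at most $C(A)+\sum_{j\in B\backslash A}B_j$, because each attaching edge costs at most the budget of the node it attaches; hence $C(B)\le C(A)+\sum_{j\in B\backslash A}B_j$, which yields $v(S_2)\ge v(S_1)$ and, in the edge version, $v_{\theta'}(U)\ge v_{\theta''}(U)$. (iii) If $i\notin S$, then no edge incident to $i$ is examined while computing $g^*(\cdot,S)$ or the MST on $g^*(\cdot,S)\cup\{s\}$, so $v_{\theta'}(S)=v_{\theta''}(S)$.

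When $g(\theta')=g(\theta'')=:K$, the two instances of Equation~(\ref{scs1}) have identical coefficients, so it suffices to compare marginals term by term: for $S\subseteq K\backslash\{i\}$, fact (iii) gives $v_{\theta'}(S)=v_{\theta''}(S)$ and fact (ii) gives $v_{\theta'}(S\cup\{i\})\ge v_{\theta''}(S\cup\{i\})$, hence $v_{\theta'}(S\cup\{i\})-v_{\theta'}(S)\ge v_{\theta''}(S\cup\{i\})-v_{\theta''}(S)$ and therefore $\phi_i(\theta')\ge\phi_i(\theta'')$.

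The genuinely hard case is $g(\theta')\supsetneq g(\theta'')$, and here the term-by-term comparison used in Theorem~\ref{thethe3} breaks down: enlarging the player set shrinks the Shapley coefficients $\frac{|S|!(|g|-|S|-1)!}{|g|!}$, which for SCSM works against the inequality we want (for AMCM it happened to work in our favour). Instead I would use the random-order form of the Shapley value. Drawing a uniformly random order $\pi$ of $g(\theta')$, letting $P$ be the set of predecessors of $i$ in $\pi$ and $Q=P\cap g(\theta'')$, and noting that the restriction of $\pi$ to $g(\theta'')$ is a uniformly random order of $g(\theta'')$, we get $$\phi_i(\theta')-\phi_i(\theta'')=\mathbb{E}_\pi\!\left[\bigl(v_{\theta'}(P\cup\{i\})-v_{\theta'}(P)\bigr)-\bigl(v_{\theta''}(Q\cup\{i\})-v_{\theta''}(Q)\bigr)\right].$$ By (iii), $v_{\theta''}(Q)=v_{\theta'}(Q)$, and by (ii), $v_{\theta''}(Q\cup\{i\})\le v_{\theta'}(Q\cup\{i\})$; so it remains to establish the pointwise inequality $v_{\theta'}(P\cup\{i\})-v_{\theta'}(P)\ge v_{\theta'}(Q\cup\{i\})-v_{\theta'}(Q)$ whenever $Q\subseteq P$ and $P\backslash Q\subseteq g(\theta')\backslash g(\theta'')$. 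This is a restricted supermodularity statement for the saving game; it is false for arbitrary $Q\subseteq P$, so the proof must use that every node in $P\backslash Q$ is connectable under $\theta'$ only through $i$'s edges, which is precisely why placing $i$ after the larger prefix $P$ unlocks at least as much saving as placing it after $Q$. Making this last step precise by comparing the minimum spanning trees on $g^*(\theta',P\cup\{i\})\cup\{s\}$, $g^*(\theta',P)\cup\{s\}$, $g^*(\theta',Q\cup\{i\})\cup\{s\}$ and $g^*(\theta',Q)\cup\{s\}$ is where the real work lies; everything else is bookkeeping with facts (i)--(iii).
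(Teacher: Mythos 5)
Your reduction to $\phi_i(\theta')\ge\phi_i(\theta'')$ and your treatment of the case $g(\theta')=g(\theta'')$ are sound, and in that case your facts (ii)--(iii) are just a cleaner packaging of the paper's three-way analysis of $g^*(\theta',S\cup\{i\})$. The genuine gap is exactly where you admit ``the real work lies'': in the case $g(\theta'')\subsetneq g(\theta')$ you set up the random-order coupling and reduce everything to the pointwise inequality $v_{\theta'}(P\cup\{i\})-v_{\theta'}(P)\ge v_{\theta'}(Q\cup\{i\})-v_{\theta'}(Q)$ with $Q=P\cap g(\theta'')$, but you never prove it. Since this inequality carries the entire hard case, what you have submitted is a proof plan, not a proof, of the theorem. (For comparison, the paper closes this case by repeating the term-by-term comparison of marginals and asserting that the terms with $S\not\subseteq g(\theta'')$ are nonnegative; your point that the Shapley coefficients $\frac{|S|!(|g|-|S|-1)!}{|g|!}$ shrink when the player set grows, so that a weighted term-by-term comparison does not by itself go through, is a legitimate criticism of that shortcut, and the random-order formulation is the right way to make it rigorous --- but only if the final inequality is actually established.)

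The gap is closable with the tools you already listed, and without any supermodularity. Every edge examined by Algorithm~\ref{select} on input $(\cdot,S)$ has both endpoints in $S\cup\{s\}$, so for $i\notin S$ no edge incident to $i$ is ever inspected; hence $g^*(\theta',S)=g^*(\theta'',S)$, and by your fact (i) this set is contained in $g^*(\theta'',V)=g(\theta'')$. Consequently no node of $g(\theta')\backslash g(\theta'')$ is ever selected from an input not containing $i$, and since nodes that are never attached do not influence the closure computed by the algorithm, $g^*(\theta',P)=g^*(\theta',Q)$ and therefore $v_{\theta'}(P)=v_{\theta'}(Q)$ whenever $P\backslash Q\subseteq g(\theta')\backslash g(\theta'')$ and $i\notin P$. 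Your target inequality thus collapses to $v_{\theta'}(P\cup\{i\})\ge v_{\theta'}(Q\cup\{i\})$, which is exactly your monotonicity fact (ii) applied to $Q\cup\{i\}\subseteq P\cup\{i\}$. Spelling out this step (rather than gesturing at a ``restricted supermodularity'' of the saving game, which you neither state precisely nor verify) is what is needed to turn your proposal into a complete proof.
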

\begin{proof}
    When node $i$ truthfully reports its type $\theta_i$ (i.e. $\theta' = (\theta_i, \theta_{-i}')$), its cost share is  
    $x_i(\theta')=B_i-\phi_i$, where\\
    \begin{equation}
    \label{equa2}
		\phi_i=\sum_{S \subseteq g(\theta') \backslash \{i\}}\frac{|S|!(|g(\theta')|-|S|-1)!}{|g(\theta')|!}
		    \cdot (v(S \cup \{i\})-v(S)).
	\end{equation}
    
    When node $i$ misreports its type $\theta_i$ (i.e. $\theta'' = (\theta_i', \theta_{-i}')$), its cost share is
    $x_i(\theta'')=B_i-\phi_i'$, where\\
    \begin{equation}
    \label{equa3}
		\phi_i'=\sum_{S \subseteq g(\theta'') \backslash \{i\}}\frac{|S|!(|g(\theta'')|-|S|-1)!}{|g(\theta'')|!}
		    \cdot (v'(S \cup \{i\})-v'(S)),
		\end{equation}
		 with $v'(S)$ being the value function of $S$. All we need to do is to show $\phi_i \geq \phi_i'$.

\begin{enumerate}
    \item When $g(\theta')=g(\theta'')$,
    for each $S \subseteq g(\theta') \backslash \{i\}$, we have
    \begin{equation*}
		\frac{|S|!(|g(\theta')|-|S|-1)!}{|g(\theta')|!} =\frac{|S|!(|g(\theta'')|-|S|-1)!}{|g(\theta'')|!}.
    \end{equation*}
    
		Next, we compare $(v(S \cup \{i\})-v(S))$ with $(v'(S \cup \{i\})-v'(S))$ in three cases.
    \begin{itemize}		
\item $g^*(\theta',S\cup\{i\})=g^*(\theta',S)\cup\{i\}$. In this case, $i$ is selected and there are no more nodes in $S$ being selected due to $i$'s participation. When $i$ truthfully reports its type, we have
\begin{equation*}
v(S\cup\{i\})-v(S) =B_i+C(g^*(\theta',S))-C(g^*(\theta',S)\cup\{i\}).
\end{equation*}
Since there is at least one edge connecting $i$ to $g^*(\theta',S)\cup\{s\}$ and $i$ can afford the cost of the edge, we have
\begin{equation*}
B_i+C(g^*(\theta',S))-C(g^*(\theta',S)\cup\{i\}) \geq 0.
\end{equation*}
When $i$ misreports its type, there are two possibilities.
	    	 \begin{itemize}        
	    	    \item $g^*(\theta'',S\cup\{i\})=g^*(\theta',S)\cup\{i\}$: since $C(g^*(\theta',S))$ cannot use the nodes outside $g^*(\theta',S)$, it remains unchanged. Again, due to $i$'s misreporting, we have $$C(g^*(\theta'',S)\cup\{i\}) \geq C(g^*(\theta',S)\cup\{i\}).$$Hence, we infer that $v(S\cup\{i\})-v(S)\geq v'(S\cup\{i\})-v'(S)$. 
	            \item $g^*(\theta'',S\cup\{i\})=g^*(\theta',S)$: we have $v(S\cup\{i\})-v(S)\geq v'(S\cup\{i\})-v'(S)=0$.  
	        \end{itemize}
	        
\item $g^*(\theta',S\cup\{i\})=g^*(\theta',S)$. In this case, when $i$ truthfully reports its type, it is not selected and we get $v(S\cup\{i\})-v(S)=0$. When node $i$ misreports its type, it is still not selected since the set of available edges shrinks. Hence, we have $v(S\cup\{i\})-v(S)=v'(S\cup\{i\})-v'(S)=0$.  

\item $g^*(\theta',S\cup\{i\})=g^*(\theta',S)\cup  S_0\cup\{i\}$, where $S_0$ denotes the set of nodes that are selected in $g^*(\theta',S\cup\{i\})$ due to $i$'s participation. When $i$ truthfully reports its type, we have
\begin{equation*}
\begin{split}
&v(S\cup\{i\})-v(S)\\
&=\sum_{j\in S_0\cup\{i\}}B_j +C(g^*(\theta',S)) - C(g^*(\theta',S)\cup  S_0\cup\{i\}). \\   
\end{split}
\end{equation*}
The cost of the edges which belong to the spanning tree generated by Algorithm~\ref{select} and which connect $S_0 \cup \{i\}$ to $s$ is larger than or equal to $C(g^*(\theta',S) \cup S_0 \cup \{i\})-C(g^*(\theta',S))$. Hence, we get $v(S\cup\{i\})-v(S)\geq 0 $. When node $i$ misreports its type, there are two possibilities.
	    \begin{itemize}
	        \item $g^*(\theta'',S\cup\{i\})=g^*(\theta',S)$: since node $i$ is not selected by our mechanism, it holds that $v(S\cup \{i\})-v(S)\geq v'(S\cup\{i\})-v'(S)=0$. 
	        \item $g^*(\theta'',S\cup\{i\})=g^*(\theta',S)\cup  S_0^{'}\cup\{i\}$ with $S_0^{'} \subseteq S_0$: node $i$ is still selected and its misreporting may shrink the set $S_0$. Let 
	        \begin{equation*}
	        \begin{split}
	       m_i &= v(S\cup \{i\})-v(S)\\&= 
	       \sum_{j\in S_0\cup\{i\}}B_j +C(g^*(\theta',S)) - C(g^*(\theta',S)\cup  S_0\cup\{i\})
	        \end{split}
	        \end{equation*}
	         and 
	         \begin{equation*}
	             \begin{split}
	               m_i' &= v'(S\cup \{i\})-v'(S)\\ 
	               &= \sum_{j\in S_0^{'}\cup\{i\}}B_j +C(g^*(\theta'',S)) - C(g^*(\theta'',S)\cup  S_0^{'}\cup\{i\}).
	             \end{split}
	         \end{equation*}
	          We infer that	 
	          \begin{equation*}
	              \begin{split}
	       &m_i-m_i'  \\&= 
	       \sum_{j\in S_0\backslash S_0^{'}}B_j - C(g^*(\theta',S)\cup  S_0\cup\{i\}) + C(g^*(\theta'',S)\cup  S_0^{'}\cup\{i\})\\
	       &= \sum_{j\in S_0\backslash S_0^{'}}B_j-C(g^*(\theta',S)\cup S_0\cup\{i\}) +C(g^*(\theta',S)\cup  S_0^{'}\cup\{i\})
	       \\
	       &-C(g^*(\theta',S)\cup  S_0^{'}\cup\{i\})+C(g^*(\theta'',S)\cup  S_0^{'}\cup\{i\}).
	              \end{split}
	          \end{equation*}
	       Since the nodes in $S_0\backslash S_0^{'}$ are selected when node $i$ reports its type truthfully, we get
	       \begin{equation*}
	       \sum_{j\in S_0\backslash S_0^{'}}B_j +C(g^*(\theta',S)\cup  S_0^{'}\cup\{i\})-C(g^*(\theta',S)\cup S_0\cup\{i\}) \geq 0.
	       \end{equation*}
	        For the nodes in $g^*(\theta',S)\cup S_0^{'}\cup\{i\}$, the cost of the minimum spanning tree of the graph induced by $g^*(\theta',S)\cup S_0^{'}\cup\{i\}$ will increase when node $i$ misreports its type since the set of available edges shrinks. Then we have
	        \begin{equation*}
	         C(g^*(\theta'',S)\cup  S_0^{'}\cup\{i\})-C(g^*(\theta',S)\cup  S_0^{'}\cup\{i\}) \geq 0.   
	        \end{equation*}
	         Thus $m_i-m_i' \geq 0$. Further, we have $v(S\cup \{i\})-v(S)\geq v'(S\cup\{i\})-v'(S)$. 
	    \end{itemize}
\end{itemize}
	Putting the above analysis together, we have $\phi_i \geq \phi_i'$.
	\item When $g(\theta') \ne g(\theta'')$, then $g(\theta'')$ is a proper subset of $g(\theta')$. For each $S \subseteq g(\theta'')$, an argument similar to the one used in the case of $g(\theta')=g(\theta'')$ shows that the corresponding term in Equation~(\ref{equa2}) is larger than or equal to that in Equation~(\ref{equa3}). Whereas for each $S \subseteq g(\theta') \backslash g(\theta'')$, the corresponding term in Equation~(\ref{equa2}) is larger than or equal to 0. So we have $\phi_i \geq \phi_i'$.
\end{enumerate}
\end{proof}

\begin{theorem}
\label{}
The saving-based cost sharing mechanism satisfies budget feasibility.
\end{theorem}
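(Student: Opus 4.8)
The plan is to reduce the statement to the non-negativity of the quantity $\phi_i$ computed in Algorithm~\ref{budget}. For a node $i\notin g(\theta')$ the mechanism sets $x_i(\theta')=0$, and since every budget satisfies $B_i>0$ by assumption, such a node trivially obeys $x_i(\theta')\leq B_i$. For a node $i\in g(\theta')$ the mechanism sets $x_i(\theta')=B_i-\phi_i$ by Equation~(\ref{scs2}), so $x_i(\theta')\leq B_i$ holds exactly when $\phi_i\geq 0$. Hence the whole theorem follows once we show $\phi_i\geq 0$ for every $i\in g(\theta')$.

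To prove $\phi_i\geq 0$ I would note that every coefficient $\frac{|S|!(|g(\theta')|-|S|-1)!}{|g(\theta')|!}$ appearing in Equation~(\ref{scs1}) is non-negative, so it suffices to establish the monotonicity of the value function of Equation~(\ref{sss}), namely $v(S\cup\{i\})-v(S)\geq 0$ for every $S\subseteq g(\theta')\backslash\{i\}$. This monotonicity was already verified inside the proof of Theorem~\ref{truthful11}, where the three possibilities for how $g^*(\theta',S\cup\{i\})$ relates to $g^*(\theta',S)$ were analysed: if $i$ is not selected then $v(S\cup\{i\})-v(S)=0$; if $i$ (and only $i$) is added then $v(S\cup\{i\})-v(S)=B_i+C(g^*(\theta',S))-C(g^*(\theta',S)\cup\{i\})\geq 0$, because the lightest edge attaching $i$ to the current component costs at most $B_i$; and if $i$ together with a further set $S_0$ is added then $v(S\cup\{i\})-v(S)=\sum_{j\in S_0\cup\{i\}}B_j+C(g^*(\theta',S))-C(g^*(\theta',S)\cup S_0\cup\{i\})\geq 0$, because extending a minimum spanning tree of the graph induced by $g^*(\theta',S)\cup\{s\}$ by the attaching edges that Algorithm~\ref{select} chooses for the newcomers yields a spanning tree of the graph induced by $g^*(\theta',S\cup\{i\})\cup\{s\}$ whose extra cost is at most $\sum_{j\in S_0\cup\{i\}}B_j$. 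Consequently every marginal contribution in Equation~(\ref{scs1}) is non-negative, so $\phi_i\geq 0$ and therefore $x_i(\theta')=B_i-\phi_i\leq B_i$.

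Putting the two cases together gives $x_i(\theta')\leq B_i$ for every $i\in V$ and every $\theta'\in\Theta$, which is exactly budget feasibility. The only delicate point, and the step I expect to need the most care, is the third case above: one must be certain that enlarging the connected component by the set $S_0\cup\{i\}$ raises the minimum connection cost by no more than the total budget $\sum_{j\in S_0\cup\{i\}}B_j$ of the newly admitted nodes. This is precisely where the admission test $B_j\geq c_{(i,j)}$ in Algorithm~\ref{select} is used: the edges accepted while building $N$ form a spanning tree on the admitted vertices, each edge incident to a newcomer can be charged to that newcomer's budget, and hence the incremental spanning-tree cost is dominated by $\sum_{j\in S_0\cup\{i\}}B_j$. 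Everything else --- the non-negativity of the Shapley weights and the fact that a weighted sum of non-negative marginal contributions is non-negative --- is routine.
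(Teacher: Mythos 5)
Your proposal is correct and follows essentially the same route as the paper: reduce budget feasibility to $\phi_i \geq 0$ via $x_i(\theta')=B_i-\phi_i$, then establish $v(S\cup\{i\})-v(S)\geq 0$ through the same three-case analysis of how $g^*(\theta',S\cup\{i\})$ relates to $g^*(\theta',S)$, invoking the bounds already derived in the truthfulness proof. Your extra charging argument for the third case (each newcomer's attaching edge in Algorithm~\ref{select} costs at most its budget) just spells out the step the paper states more tersely, so nothing is missing.
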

\begin{proof}
    Given a report profile $\theta' \in \Theta$, for $i \in V \backslash g(\theta')$, its cost share $x_i(\theta')=0$, which is less than its budget. For $i \in g(\theta')$, we have $x_i(\theta') = B_i -\phi_i$, where
    \begin{equation*}
		\phi_i=\sum_{S \subseteq g(\theta') \backslash \{i\}}\frac{|S|!(|g(\theta')|-|S|-1)!}{|g(\theta')|!}
		    \cdot (v(S \cup \{i\})-v(S)).
	\end{equation*}
    By the definition of selection function, there are three cases.
\begin{itemize}
\item $g^*(\theta',S\cup\{i\})=g^*(\theta',S)$. Since node $i$ is not selected, we have $v(S\cup\{i\})-v(S)=0$.

\item $g^*(\theta',S\cup\{i\})=g^*(\theta',S)\cup\{i\}$. Since node $i$ is selected, we have $v(S\cup\{i\})-v(S)=B_i+C(g^*(\theta',S))-C(g^*(\theta',S)\cup\{i\})$. From the proof of Theorem~\ref{truthful11}, we have $B_i+C(g^*(\theta',S))-C(g^*(\theta',S)\cup\{i\}) \geq 0$ and thus $v(S\cup\{i\})-v(S)\geq 0$.

\item $g^*(\theta',S\cup\{i\})=g^*(\theta',S)\cup  S_0\cup\{i\}$, where $S_0$ denotes the set of nodes that are selected in $g^*(\theta',S\cup\{i\})$ due to $i$'s participation. From the proof of Theorem~\ref{truthful11}, we get $v(S\cup\{i\})-v(S)\geq 0$.
\end{itemize}
    From the above analysis, it follows that for node $i \in g(\theta')$, $\phi_i$ is non-negative. So we infer that $x_i(\theta') \leq B_i$.
\end{proof}

\begin{theorem}
\label{}
The saving-based cost sharing mechanism satisfies budget balance.
\end{theorem}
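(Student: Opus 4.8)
The plan is to reduce budget balance for the SCSM to the efficiency (budget balance) property of the Shapley value, in the same spirit as the proof of Theorem~\ref{effi}, but now applied to the saving-based value function $v$ of Equation~(\ref{sss}). Recall that under the SCSM we have $x_i(\theta')=B_i-\phi_i$ for $i\in g(\theta')$ and $x_i(\theta')=0$ otherwise, so $\sum_{i\in V}x_i(\theta')=\sum_{i\in g(\theta')}B_i-\sum_{i\in g(\theta')}\phi_i$. Since $\phi_i$ as defined in Equation~(\ref{scs1}) is exactly the Shapley value of the cooperative game $(g(\theta'),v)$, and since $v(\emptyset)=\sum_{i\in g^*(\theta',\emptyset)}B_i-C(g^*(\theta',\emptyset))=0$ because the selection function returns $\emptyset$ on input $\emptyset$, efficiency of the Shapley value yields $\sum_{i\in g(\theta')}\phi_i=v(g(\theta'))$.

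The second step is to evaluate $v(g(\theta'))$, for which I would establish the idempotence claim $g^*(\theta',g(\theta'))=g(\theta')$: every node of $g(\theta')=g^*(\theta',V)$ is still selected when Algorithm~\ref{select} is run on the smaller input set $g(\theta')$. Tracing the algorithm, the run on $V$ adds nodes to $N$ in some order $n_1,\dots,n_k$ with $\{n_1,\dots,n_k\}=g(\theta')$, each $n_t$ entering through a smallest available edge of weight at most $B_{n_t}$; the nodes ever rejected all lie outside $g(\theta')$, hence are absent from the input $S=g(\theta')$, so the run on $g(\theta')$ examines the same edges in the same order and reproduces exactly the same additions. Consequently $g^*(\theta',g(\theta'))=g(\theta')$ and, by Equation~(\ref{sss}), $v(g(\theta'))=\sum_{i\in g(\theta')}B_i-C(g(\theta'))$.

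Combining the two steps gives $\sum_{i\in V}x_i(\theta')=\sum_{i\in g(\theta')}B_i-v(g(\theta'))=C(g(\theta'))$, so it only remains to identify $C(g(\theta'))$ with the cost of the selected edges. Since $f(\theta')=E^{MST}_{g(\theta')}$ is the minimum spanning tree of the graph induced by $g(\theta')\cup\{s\}$ — which is connected, as each node added by $g^*$ attaches to the current component via a single edge — its total weight equals, by definition, $C(g(\theta'))$; that is, $\sum_{(i,j)\in f(\theta')}c_{(i,j)}=C(g(\theta'))$. Therefore $\sum_{i\in V}x_i(\theta')=\sum_{(i,j)\in f(\theta')}c_{(i,j)}$, which is budget balance.

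The step I expect to be the main obstacle is the idempotence claim $g^*(\theta',g(\theta'))=g(\theta')$: it is intuitively evident but needs a careful argument that shrinking the candidate set from $V$ to $g(\theta')$ alters neither the order in which Algorithm~\ref{select} inspects edges nor the accept/reject decision at each step, precisely because the only nodes deleted are ones the algorithm had already rejected. The remaining ingredients — the Shapley efficiency identity (already invoked in Theorem~\ref{effi}) and the identification of the MST weight with $C(g(\theta'))$ — are routine.
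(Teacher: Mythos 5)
Your proposal is correct and follows essentially the same route as the paper's proof: write $\sum_{i\in g(\theta')}x_i(\theta')=\sum_{i\in g(\theta')}B_i-\sum_{i\in g(\theta')}\phi_i$, invoke efficiency of the Shapley value to get $\sum_{i\in g(\theta')}\phi_i=v(g(\theta'))$, and then identify $v(g(\theta'))=\sum_{i\in g(\theta')}B_i-C(g(\theta'))$ with $C(g(\theta'))$ equal to the weight of the selected minimum spanning tree. The only difference is that you explicitly verify the idempotence $g^*(\theta',g(\theta'))=g(\theta')$, which the paper simply asserts ``by the definition of mechanism''; this is a reasonable extra care and does not change the argument.
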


\begin{proof}
    According to the edge selection policy, our mechanism outputs a spanning tree of the graph induced by $g(\theta')\cup \{s\}$. Given a report profile $\theta' \in \Theta$, by $x_i(\theta')=B_i-\phi_i$, we have
    \begin{equation*}
    \sum_{i\in g(\theta')}x_i(\theta')=\sum_{i\in g(\theta')}B_i-\sum_{i\in g(\theta')}\phi_i.
    \end{equation*}
    By the property of Shapley value, we have 
    \begin{equation*}
    \sum_{i\in g(\theta')}\phi_i = v(g(\theta')). 
    \end{equation*}
    Therefore, 
    \begin{equation*}
     \sum_{i\in g(\theta')}x_i(\theta') =  \sum_{i\in g(\theta')}B_i - v(g(\theta')).   
    \end{equation*}
    By the definition of mechanism, we get
    \begin{equation*}
       v(g(\theta')) =  \sum_{i\in g(\theta')}B_i - C(g(\theta')). 
    \end{equation*}
    Thus, we get
    \begin{equation*}
    \begin{split}
    \sum_{i\in g(\theta')}x_i(\theta')&=  \sum_{i\in g(\theta')}B_i-(\sum_{i\in g(\theta')}B_i-C(g(\theta')))\\ 
    &=C(g(\theta'))\\
    &=\sum_{(i,j) \in f(\theta')}c_{(i,j)},
    \end{split}
    \end{equation*}
     which states that our mechanism satisfies budget balance. 
\end{proof}

\begin{theorem}
\label{}
The saving-based cost sharing mechanism satisfies cost monotonicity.
\end{theorem}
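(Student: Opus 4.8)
The plan is to reduce cost monotonicity to a monotonicity statement about the Shapley value $\phi_i$. Since $x_i(\theta')=B_i-\phi_i$ for every selected node, the inequality $x_i(\theta')\le x_i^+(\theta')$ is equivalent to $\phi_i^+\le\phi_i$, where $\phi_i^+$ is the Shapley value of $i$ in the savings game $v$ of Algorithm~\ref{budget} after the cost of the adjacent edge $(i,j)\in\theta_i'$ has increased. Recall that $\phi_i$ is a convex combination, with the nonnegative coefficients $\frac{|S|!(|g(\theta')|-|S|-1)!}{|g(\theta')|!}$ summing to $1$, of the marginal contributions $v(S\cup\{i\})-v(S)$ over $S\subseteq g(\theta')\backslash\{i\}$. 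So I would first treat the case in which $g(\theta')$ does not change and show that each such marginal contribution weakly decreases; summing over $S$ then yields $\phi_i^+\le\phi_i$. This mirrors the proof of cost monotonicity for the AMCM.

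Fix $S\subseteq g(\theta')\backslash\{i\}$. First I would observe that $v(S)$ is \emph{unchanged}: because $i\notin S$, node $i$ is never added by $g^*(\theta',S)$, and the edge $(i,j)$ is never even examined when Algorithm~\ref{select} is run on $S$ (every edge it inspects has both endpoints in $S\cup\{s\}$), so neither $g^*(\theta',S)$ nor $C(g^*(\theta',S))$, hence nor $v(S)$, depends on $c_{(i,j)}$. Next I would show $v(S\cup\{i\})$ weakly decreases, following the three-case split on $g^*(\theta',S\cup\{i\})$ from the proof of Theorem~\ref{truthful11}. If $i$ is not selected, then raising $c_{(i,j)}$ keeps it unselected and the marginal contribution stays $0$. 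If $i$ is selected, possibly together with a set $S_0$ of nodes reachable only through $i$, then since raising $c_{(i,j)}$ only destroys affordable connection routes the selected set can only shrink, from $T=g^*(\theta',S\cup\{i\})$ to some $T'\subseteq T$. If $i\notin T'$ the marginal contribution drops to $0$, not exceeding its (nonnegative) old value; if $i\in T'$ I would bound $C(T)$ by $C^+(T')+\sum_{k\in T\backslash T'}B_k$ (take a minimum spanning tree of the graph induced by $T'\cup\{s\}$ and, for each dropped node $k$, re-attach it via the edge along which Algorithm~\ref{select} added it, whose cost is at most $B_k$ by the budget check; edge costs only increased). Then $v(S\cup\{i\})=\sum_{k\in T}B_k-C(T)\ge\sum_{k\in T'}B_k-C^+(T')=v^+(S\cup\{i\})$, and combined with the invariance of $v(S)$ the marginal contribution weakly decreases.

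To finish I would handle the possibility that $g(\theta')$ itself changes. Raising $c_{(i,j)}$ only removes affordable connection routes, so $g(\theta')=g^*(\theta',V)$ can only shrink; as $i$ is assumed to remain selected, the new node set $g^+(\theta')$ satisfies $g^+(\theta')\subseteq g(\theta')$ and $i\in g^+(\theta')$. I would then mirror Case~2 of the proof of Theorem~\ref{truthful11}: for every $S\subseteq g^+(\theta')\backslash\{i\}$ the per-term estimate above applies, while every term of $\phi_i$ indexed by a subset of $g(\theta')\backslash\{i\}$ meeting $g(\theta')\backslash g^+(\theta')$ is nonnegative, because the marginal contributions of $v$ are nonnegative as shown in the budget feasibility proof. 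Together these give $\phi_i^+\le\phi_i$, that is $x_i^+(\theta')=B_i-\phi_i^+\ge B_i-\phi_i=x_i(\theta')$.

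I expect the middle step to be the main obstacle: proving that $v(S\cup\{i\})$ does not increase when $c_{(i,j)}$ grows. Unlike in the AMCM, here the selected set may genuinely shrink, and one must show that the part of the budgets no longer required for the dropped nodes, $\sum_{k\in T\backslash T'}B_k$, covers the extra connection cost $C(T)-C^+(T')$; this is exactly where one must use that Algorithm~\ref{select} adds a node only through an edge the node can afford, so that the greedy tree it builds certifies the needed inequality. The remaining subtlety, comparing Shapley values over the two player sets $g^+(\theta')$ and $g(\theta')$, is handled as in the truthfulness proof.
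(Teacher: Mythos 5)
Your proposal is correct and takes essentially the same route as the paper: reduce cost monotonicity to showing the Shapley value $\phi_i$ of the savings game weakly decreases when $c_{(i,j)}$ grows, split on whether $g(\theta')$ stays the same or shrinks, and argue that each marginal contribution $v(S\cup\{i\})-v(S)$ weakly decreases. You actually supply more detail than the paper, which merely asserts that the marginal contribution decreases ``by the expression'' — your observation that $v(S)$ is untouched and your reattachment bound $C(T)\le C^{+}(T')+\sum_{k\in T\backslash T'}B_k$ (using the budget check in Algorithm~\ref{select}) make that step precise — while your treatment of the shrinking-$g(\theta')$ case simply reuses the comparison pattern from the proof of Theorem~\ref{truthful11}, so the residual looseness there (comparing Shapley sums over two different player sets term by term) is exactly the paper's own.
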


\begin{proof}
Given $\theta' \in \Theta$, for all $i \in g(\theta')$, for all $(i,j) \in E$, when $c_{(i,j)}$ increases, there are two possibilities.
		    
First, $g(\theta')$ does not change. Since $x_i(\theta')=B_i-\phi_i$ where
\begin{equation}
		\phi_i=\sum_{S \subseteq g(\theta') \backslash \{i\}}\frac{|S|!(|g(\theta')|-|S|-1)!}{|g(\theta')|!} \cdot (v(S \cup \{i\})-v(S)),   
\end{equation}
it suffices to show that $v(S\cup\{i\})-v(S)$ decreases or does not change.
 \begin{itemize}
\item If $i \notin g(\theta')$, then $i$ is still not selected by the mechanism. Thus $v(S \cup \{i\})-v(S)$ does not change.

\item If $i \in g(\theta')$, for node $i$, there are two cases.
    \begin{itemize}
        \item Node $i$ is not selected anymore. Its cost share becomes 0. So $v(S \cup \{i\})-v(S)$ decreases.
        \item Node $i$ is still selected. For each $S \subseteq V$, by the expression of $v(S\cup\{i\})-v(S)$, it decreases.
    \end{itemize}
    \end{itemize}
    
Second, $g(\theta')$ shrinks. We show that the cost share of $i$ weakly increases. 
    \begin{itemize}
        \item If $i \notin g(\theta')$, its cost share does not change and the statement holds. 
        \item If $i \in g(\theta')$, then $j \notin g(\theta')$. We infer that $j$ connects to the source through $(i,j)$ before $g(\theta')$ shrinks. Therefore, $i$'s savings decrease. Thus the cost share of $i$ weakly increases.    
 \end{itemize}
 
Putting the above analysis together, when $c_{(i,j)}$ increases, the cost share of $i$ weakly increases.
\end{proof}

\section{Conclusions}
\label{con}
For cost sharing problems on graphs, we considered an important strategic behavior of agents, i.e. cutting their adjacent edges. We proposed two mechanisms where each agent has an unlimited budget or a limited budget respectively. Our mechanisms can incentivize all agents to offer all their adjacent edges so that we can minimize the total cost of connectivity by using all the connections. One meaningful future work is to characterize all possible such mechanisms.

\begin{acks}
This work is supported by Science and Technology Commission of Shanghai Municipality (No. 23010503000 and No. 22ZR1442200), and Shanghai Frontiers Science Center of Human-centered Artificial Intelligence (ShangHAI).
\end{acks}

%%
%% The next two lines define the bibliography style to be used, and
%% the bibliography file.
\bibliographystyle{ACM-Reference-Format}
\bibliography{sample-base}

%%
%% If your work has an appendix, this is the place to put it.
\iffalse
\appendix

\section{Research Methods}

\subsection{Part One}

Lorem ipsum dolor sit amet, consectetur adipiscing elit. Morbi
malesuada, quam in pulvinar varius, metus nunc fermentum urna, id
sollicitudin purus odio sit amet enim. Aliquam ullamcorper eu ipsum
vel mollis. Curabitur quis dictum nisl. Phasellus vel semper risus, et
lacinia dolor. Integer ultricies commodo sem nec semper.

\subsection{Part Two}

Etiam commodo feugiat nisl pulvinar pellentesque. Etiam auctor sodales
ligula, non varius nibh pulvinar semper. Suspendisse nec lectus non
ipsum convallis congue hendrerit vitae sapien. Donec at laoreet
eros. Vivamus non purus placerat, scelerisque diam eu, cursus
ante. Etiam aliquam tortor auctor efficitur mattis.

\section{Online Resources}

Nam id fermentum dui. Suspendisse sagittis tortor a nulla mollis, in
pulvinar ex pretium. Sed interdum orci quis metus euismod, et sagittis
enim maximus. Vestibulum gravida massa ut felis suscipit
congue. Quisque mattis elit a risus ultrices commodo venenatis eget
dui. Etiam sagittis eleifend elementum.

Nam interdum magna at lectus dignissim, ac dignissim lorem
rhoncus. Maecenas eu arcu ac neque placerat aliquam. Nunc pulvinar
massa et mattis lacinia.

\fi
\end{document}